\DeclareFontFamily{U}{esvect}{}
\DeclareFontShape{U}{esvect}{m}{n}{
  <-5.5> vect5
  <5.5-6.5> vect6
  <6.5-7.5> vect7
  <7.5-8.5> vect8
  <8.5-9.5> vect9
  <9.5-> vect10
}{}
\newcommand{\mathsc}[1]{\textup{\textsc{#1}}}
\newcommand*{\srFlag}[1]{\emph{#1}}
\newcommand*{\Z}{\mathbb{Z}}
\newcommand*{\bigO}{\mathcal{O}}
\newcommand*{\smallO}{o}
\newcommand*{\graph}{G}
\newcommand*{\vertices}{V}
\newcommand*{\edges}{E}
\newcommand*{\dirN}{\mathsc{n}}
\newcommand*{\dirNE}{\mathsc{ne}}
\newcommand*{\dirSE}{\mathsc{se}}
\newcommand*{\dirS}{\mathsc{s}}
\newcommand*{\dirSW}{\mathsc{sw}}
\newcommand*{\dirNW}{\mathsc{nw}}
\newcommand*{\vecN}{\vv{\dirN}}
\newcommand*{\vecNE}{\vv{\dirNE}}
\newcommand*{\vecSE}{\vv{\dirSE}}
\newcommand*{\vecS}{\vv{\dirS}}
\newcommand*{\vecSW}{\vv{\dirSW}}
\newcommand*{\vecNW}{\vv{\dirNW}}
\newcommand*{\directions}{\mathcal{D}}
\newcommand*{\ShapeReconfigurationProblem}{\phase{srProblem}} % defined in the main text
\newcommand*{\robot}{r}
\newcommand*{\robotnode}[1][]{p\ifthenelse{\isempty{#1}}{}{^{#1}}}
\newcommand*{\config}[1][]{C\ifthenelse{\isempty{#1}}{}{^{#1}}}
\newcommand*{\numTiles}{n}
\newcommand*{\numSupply}{m}
\newcommand*{\columnPointer}{d_{\mathrm{col}}}
\newcommand*{\outsidePointer}{d_{\mathrm{out}}}
\newcommand*{\tiles}[1][]{T\ifthenelse{\isempty{#1}}{}{^{#1}}}
\newcommand*{\srInput}{\mathcal{I}}
\newcommand*{\srTarget}{\mathcal{T}}
\newcommand*{\srTargetTiles}[1][]{\tiles[#1] \cap \srTarget}
\newcommand*{\srSupply}[1][]{\tiles[#1] \setminus \srTarget}
\newcommand*{\srDemand}[1][]{\srTarget \setminus \tiles[#1]}
\newcommand*{\srBoundary}[1][]{B\ifthenelse{\isempty{#1}}{}{_{#1}}}
\newcommand*{\nbrhood}{N}
\newcommand*{\blockAlg}{\emph{block formation}}
\newcommand*{\lineAlg}{\emph{line formation}}
\newcommand*{\mazeAlg}{\emph{maze exploration}}
\newcounter{phasecnt} % counter for phases
\newcommand{\newPhase}[2]{%
  \refstepcounter{phasecnt}%
  \phantomsection%
  \begingroup%
    \edef\@currentlabelname{#2}%
    \label{phase:#1}%
  \endgroup%
  \hypertarget{phase:#1}{\textsc{#2}}%
}
\DeclareRobustCommand{\phase}[1]{%
  \hyperlink{phase:#1}{\textsc{\nameref{phase:#1}}}%
}
\definecolor{gridColor}{rgb}{0.8,0.8,0.81}
\definecolor{tileColor}{rgb}{0.85,0.85,0.86} % lipicsLightGray
\definecolor{pebbleColor}{rgb}{0.51,0.5,0.52} % lipicsLineGray
\definecolor{pathColor}{rgb}{0.6,0.6,0.61} % lipicsBulletGray
\definecolor{supplyColor}{rgb}{1,0.9,0.7}
\definecolor{targetColor}{rgb}{0.3,0.45,0.95}
\definecolor{targetLineColor}{rgb}{0.8,0.9,1}
\definecolor{uniquePointColor}{rgb}{0.3,0.8,0.25}
\tikzset{
	gridline/.style = {
		line width=0.125mm,
		gridColor,
	},
	tile/.style = {
		draw,
		transform shape,
		shape=regular polygon,
		regular polygon sides=6,
		minimum size=sqrt(3)/3*4cm,
		line width=0.2mm,
		fill=tileColor,
	},
	supplytile/.style = {
		tile,
		fill=supplyColor,
	},
	targettile/.style = {
		tile,
		fill=targetColor,
	},
	pebble/.style = {
	  transform shape,
      shape=regular polygon,
      regular polygon sides=6,
      minimum size=1.7cm,
      line width=0.5mm,
      draw=pebbleColor,
    },
	robot/.style = {
		draw,
		transform shape,
		circle,
		minimum size=0.8cm,
		fill=white,
		line width=0.2mm,
	},
	carrytile/.style = {
		tile,
		minimum size=1.3cm,
	},
	targetline/.style = {
        draw=black,
        line width=1.4mm,
		rounded corners=0.5mm,
    },
    innertargetline/.style = {
        draw=targetLineColor,
        line width=1mm,
		rounded corners=0.5mm,
	},
	nexttile/.style = {
		draw,
		transform shape,
		aligned dash={dash on=5pt, dash off=2pt},
		shape=regular polygon,
		regular polygon sides=6,
		minimum size=1.65cm,
		line width=0.3mm,
	},
	double arrow/.style args={#1 colored by #2 and #3}{
        -stealth,line width=#1,#2, % first arrow
        %postaction={draw,-stealth,#3,line width=(#1-0.6mm),
        %            shorten <=0.6mm, shorten >=0.6mm}, % second arrow
        postaction={draw,-stealth,#3,line width=(#1-0.4mm),
                    shorten <=0.4mm, shorten >=0.4mm}, % second arrow
    },
	boundary/.style = {
		draw=brown!60!black,
		line width=1.2mm,
	},
	phasenode/.style = {
		draw=none,
		inner sep=2pt,
	},
}
\def\pgf@dec@dashon{5pt}
\def\pgf@dec@dashoff{5pt}
\pgfextract@process\pgffirstpoint{\pgfpointdecoratedinputsegmentfirst}%
\pgfextract@process\pgfsecondpoint{\pgfpointdecoratedinputsegmentlast}%
\pgfmathsetlengthmacro\pgf@dec@dashon{\pgf@dec@dashon}%
\pgfmathsetlengthmacro\pgf@dec@dashoff{\pgf@dec@dashoff}%
\pgfmathsetlengthmacro\pgf@dec@halfdash{\pgf@dec@dashon/2}%
  \let\pgf@n=\pgfmathresult
\let\pgf@n=\pgfmathresult%
  \pgfmathsetlengthmacro\pgf@b{\pgf@b+\pgf@dec@dashon}%
\tikzset{aligned dash/.style={
  decoration={aligned dash, #1}, decorate
}}
\title{Tile Reconfiguration by a Finite Automaton}
\author{Jonas Friemel}{Bochum University of Applied Sciences, Germany}{jonas.friemel@hs-bochum.de}{https://orcid.org/0009-0009-6270-4779}{}
\author{David Liedtke}{Paderborn University, Germany}{liedtke@mail.upb.de}{https://orcid.org/0000-0002-4066-0033}{}
\author{Christian Scheffer}{Bochum University of Applied Sciences, Germany}{christian.scheffer@hs-bochum.de}{https://orcid.org/0000-0002-3471-2706}{}
\authorrunning{J. Friemel, D. Liedtke, and C. Scheffer}
\keywords{Programmable matter, reconfiguration, finite automaton}
\begin{document}

\maketitle

\begin{abstract}
Shape formation is one of the most thoroughly studied problems in programmable matter and swarm robotics.
However, in many models, the class of shapes that can be formed is highly restricted due to the particles' limited memory.
In the hybrid model, an active agent with the computational power of a deterministic finite automaton can form shapes by lifting and placing passive tiles on the triangular lattice.
We study the shape reconfiguration problem where the agent additionally has the ability to distinguish so-called target nodes from non-target nodes and needs to form a target shape from the initial tile configuration.
We present a worst-case optimal $\bigO(\numSupply \numTiles)$ algorithm for simply connected target shapes, where~$\numSupply$ is the initial number of unoccupied target nodes and~$\numTiles$ is the total number of tiles.
Furthermore, we show how an agent can reconfigure a large class of target shapes with holes in $\bigO(\numTiles^4)$ steps.
\end{abstract}

\section{Introduction}

In the field of programmable matter, small (possibly nano-scale) particles are envisioned to solve tasks like self-assembling into desired shapes, making coordinated movements, or coating objects~\cite{toffoli1991programmable}.
The particles may be controlled by external stimuli or act on their own with limited computational capabilities.
In the future, programmable matter could become relevant for targeted medical treatments~\cite{becker2020targeted} or as self-assembling structures in environments that are inaccessible by humans such as in space~\cite{jenett2017design}.
There are multiple computational models of programmable matter that differ in the types of particles, their abilities, and the underlying graph structure.
We focus on the \emph{hybrid model} of programmable matter where passive hexagonal tiles and an active agent with the computational power of a deterministic finite automaton populate the triangular lattice~\cite{gmyr2018recognition,gmyr2020forming}.

A central research problem in programmable matter is shape formation~\cite{daymude2019computing}.
In the hybrid model, the active agent has to rearrange passive tiles from an arbitrary initial configuration into a shape~\cite{gmyr2020forming} such as a line or a triangle.
Once the shape is formed, it is typically assumed that the work is finished and the shape will remain intact.
Thus, existing shape formation algorithms are not designed to repair small shape defects such as individually misplaced tiles and may deconstruct the entire structure to rebuild the desired shape.

\subsection{Our Contributions}

We provide first results for shape reconfiguration in the hybrid model with the assumption that the agent is able to recognize nodes that belong to the target shape.
We present a worst-case optimal algorithm for reconfiguring simply connected target shapes within $\bigO(\numSupply \numTiles)$ steps, where~$\numTiles$ is the configuration size and~$\numSupply$ is the number of tiles initially located outside of the target shape (\zcref{sec:target_no_holes}).
We also describe how an agent equipped with a single counter can reconfigure arbitrary shapes with holes in $\bigO(\numSupply \numTiles^2)$ steps and we present an~$\bigO(\numTiles^4)$ algorithm for scaled shapes, i.e., target shapes that do not contain any nodes with multiple disconnected non-target neighbors, that does not require the use of counters or other auxiliary means (\zcref{sec:target_with_holes}).
This is particularly interesting as it is impossible for finite automata to visit all nodes if they are unable to modify their environment~\cite{budach1978automata}.

\subsection{Related Work}

Shape formation has been studied in a multitude of models for programmable matter.
The particles in each model generally fall into one of the following two categories:
\emph{active} agents that can perform (typically limited) computations and move on the underlying graph by themselves, and \emph{passive} entities that do not move or act without external influence.

Models with distributed active agents where shape formation has been studied include:
the geometric \emph{amoebot} model~\cite{derakhshandeh2016universal,derakhshandeh2015leaderelection,diluna2020shapeformation,kostitsyna2022faulttolerant}, where particles with restricted vision occupy either one node (contracted) or two adjacent nodes (expanded) and have the computational capabilities of finite automata~\cite{daymude2019computing,daymude2023canonical,derakhshandeh2014amoebot};
the \emph{silbot} model~\cite{navarra2025line}, where the particles additionally lack persistent memory and means of computation~\cite{dangelo2020leadersilent};
the \emph{nubot} model~\cite{chen2014fastassembly,woods2013selfassembly}, where particles (\emph{monomers}) may form bonds with their neighbors and can move and interact according to predefined rules;
or the \emph{sliding square} model~\cite{dumitrescu2004motion,hurtado2015distributed,kostitsyna2025rhombus} where square particles can move via slides and convex transitions.

A well-known model of passive matter is the \emph{abstract tile assembly model (aTAM)}~\cite{doty2012selfassembly,winfree1998selfassembly}, where tiles with different types of glue with varying strengths occupy a square grid graph.
Starting from a seed configuration, the shape self-assembles with new tiles attaching themselves to compatible existing tiles.
For a survey of the aTAM and variants such as the \emph{kinetic tile assembly model~(kTAM)} and the \emph{two-handed assembly model~(2HAM)}, we refer to Patitz~\cite{patitz2014tileassembly}.
A related passive model is \emph{tilt assembly}, where tiles with uniform glue are dropped onto the seed configuration from the outside with axis-aligned moves~\cite{balanza2020hierarchical,becker2020tilt,keller2022particle,manzoor2017parallel}.

In this paper, we are working with the \emph{hybrid model} introduced by Gmyr et al.~\cite{gmyr2020forming}.
Here, active finite automata move on the triangular lattice and can lift and place passive tiles.
The authors present shape formation algorithms for simple shapes such as lines, blocks, and triangles.
On square grid graphs, finite automata can construct bounding boxes and scale polyominoes using markers~\cite{fekete2021cadbots,fekete2022connectedlattice}.
In the three-dimensional variant of the hybrid model, an agent can coat objects~\cite{kostitsyna2024coating} and construct lines~\cite{hinnenthal2020shape3d} as well as simply connected intermediate shapes called ``icicles''~\cite{hinnenthal2025icicle}.

There has been some research into repairing simple shapes such as lines.
In the amoebot model, Di Luna et al.~\cite{diluna2018linerecovery} and Nokhanji and Santoro~\cite{nokhanji2020linereconfiguration} separately study the problem of repairing a line when some of the particles that are part of the shape malfunction.
Nokhanji et al.~\cite{nokhanji2023linehybrid} also explore a similar problem in the hybrid model where tiles in a line become faulty and need to be removed.
Finally, Kostitsyna et al.~\cite{kostitsyna2023fastreconfiguration} propose a linear-time (in expectation) algorithm to reconfigure simply connected amoebot shapes where particles are pulled to their target positions along shortest path trees.

Real-world implementations of material-assembly systems with active robots include the BILL\nobreakdash-E platform~\cite{jenett2017bille,jenett2019material} or the MMIC\nobreakdash-I~\cite{formoso2023mmic}, SOLL\nobreakdash-E~\cite{park2023solle}, and SOF\nobreakdash-E~\cite{moon2025sofe} robots in NASA's ARMADAS project~\cite{costa2019algorithmic,gregg2024ultralight}.
Finding an optimal reconfiguration schedule in such systems is \NP-hard~\cite{becker2025reconfiguration,garcia2024reconfiguration}.

\section{Preliminaries}

In our setting, the triangular lattice $\graph = (\vertices, \edges)$ is occupied by a single active agent~$\robot$ with limited computational capabilities and a finite number of passive hexagonal tiles, see \zcref{fig:config_compass}.
We call a node $v \in V$ \emph{tiled} if it is occupied by a tile and denote the set of tiled nodes with~$\tiles$.
At any time, a node may be occupied by at most one tile and each tile may only occupy a single node.
Similarly, the agent may only occupy a single (tiled or untiled) node at a time.
Tiles are uniform, i.e., $\robot$ cannot distinguish any two tiles from one another, and~$\robot$ can carry up to one tile.
When carrying a tile, $\robot$ can still enter tiled nodes.
A tuple $\config = (\tiles, \robotnode)$, where $\robotnode \in \vertices$ is the node occupied by~$\robot$, is called a \emph{configuration}.
It has \emph{size}~$|\tiles|$.
All initial and target configurations in this paper have size~$\numTiles$.

We call a node set $S \subseteq \vertices$ \emph{connected} if the induced subgraph $\graph[S]$ is connected and we call~$S$ \emph{simply connected} if the set $\vertices \setminus S$ is connected.
A configuration $\config = (\tiles, \robotnode)$ is \emph{connected} if~$\tiles$ is connected or if $\tiles \cup \{\robotnode\}$ is connected and~$\robot$ is carrying a tile.
Similarly, $\config$ is \emph{simply connected} if $\tiles$ is simply connected, i.e., $\vertices \setminus \tiles$ is connected, or if $\tiles \cup \{\robotnode\}$ is simply connected and~$\robot$ is carrying a tile.
We require configurations to be connected to ensure that the tile system does not drift apart in practical implementations, e.g., in liquid domains.

\begin{figure}[tb]
    \centering%
    \includegraphics[width=0.5\textwidth]{config_compass}%
    \caption{An agent on tiles and global compass directions on the triangular lattice.}%
    \label{fig:config_compass}%
\end{figure}

The agent has an internal compass to differentiate between the six edge directions on the graph~$\graph$ (\dirN, \dirNE, \dirSE, \dirS, \dirSW, \dirNW).
For ease of presentation, we assume that this compass aligns with the global directions on the triangular lattice shown in \zcref{fig:config_compass}.
We denote the set of compass directions by $\directions \coloneq \{\dirN, \dirNE, \dirSE, \dirS, \dirSW, \dirNW\}$ and define $\directions$ to be isomorphic to the ring of integers modulo six $\nicefrac{\Z}{6\Z}$ with $\dirN \equiv 0$, $\dirNE \equiv 1$, and so on, up to $\dirNW \equiv 5$.
With a slight abuse of notation, this allows us to perform simple additions on directions, e.g., $\dirNE + 2 = \dirS$.
Intuitively, by adding $k \in \Z$ to a direction $d \in \directions$, we obtain the next direction from~$d$ after~$k$ clockwise turns of 60 degrees around the compass shown in \zcref{fig:config_compass}.

Each node $v \in \vertices$ is uniquely identified by a coordinate pair $(x, y) \in \Z \times \Z$ where $x + y$ is even.
We write $v = (x, y)$.
The $x$-coordinate grows from west to east, and the $y$-coordinate grows from north to south.
By this convention, the six compass directions correspond to the following directional vectors.
\[%
    \vecN = (0,-2),\quad \vecNE = (1,-1),\quad \vecSE = (1,1),\quad \vecS = (0,2),\quad \vecSW = (-1,1),\quad \vecNW = (-1,-1).
\]%

For a node $v \in \vertices$ and a direction $d \in \directions$, the node $v + \vv{d}$ is called a \emph{neighbor of~$v$} (in direction~$d$) and $\nbrhood(v) \coloneq \{v + \vv{d} \; | \; d \in \directions\}$ is called the \emph{neighborhood of~$v$}.

The agent acts in \emph{look-compute-move} cycles.
In the \emph{look} phase, it observes its surroundings. The agent's ``vision'' is limited to neighboring nodes, i.e., it can only see tiles within unit hop-distance to its node in $\graph$.
Next, $\robot$ enters the \emph{compute} phase where it uses the gathered information to determine its next internal state and its action on the graph.
The agent has the computational capabilities of a deterministic finite-state automaton.
Consequently, it has only constant memory and cannot store a map of the tile configuration.
Finally, $\robot$ enters the \emph{move} phase, where it may perform any of the following actions, provided that connectivity is maintained:
(i)~move to an adjacent (tiled or untiled) node (regardless of whether it is carrying a tile), (ii)~lift the tile at its position if it is not carrying a tile, and (iii)~place a tile at its position if it is carrying a tile and the node is untiled.

\subsection{Problem Statement}

Consider two connected sets of nodes~$\srInput, \srTarget \subseteq \vertices$ with $|\srInput| = |\srTarget| = \numTiles$ and an initial position $p^0 \in \srInput$ for the agent.
We refer to~$\srInput$ and~$\srTarget$ as the \emph{initial} and \emph{target shape}, respectively, with the corresponding nodes being referred to as \emph{initial} and \emph{target nodes}.
Note that a shape is defined by its exact coordinates on the lattice, i.e., two shapes that are translationally or rotationally symmetrical are generally~not~identical.

An agent solves the \newPhase{srProblem}{Shape Reconfiguration Problem} by executing an algorithm that results in a sequence of connected configurations $\config[0] = (\tiles[0], \robotnode[0]), \dots,\allowbreak \config[\ell] = (\tiles[\ell], \robotnode[\ell])$ for some $\robotnode[\ell] \in \vertices$ with $\tiles[0] = \srInput$ and $\tiles[\ell] = \srTarget$ such that each configuration~$\config[t]$ results from configuration~$\config[t-1]$ by applying the agent's legal move actions (i)--(iii) to $\robotnode[t-1]$ for $0 < t \leq \ell$.
When the time step is clear from the context or not relevant, we drop the superscripts.
At any time~$t$, a node $v \in \srSupply[t]$ is called a \emph{supply node} and a node $w \in \srDemand[t]$ is called a \emph{demand node}.
We denote the initial number of supply and demand nodes by $\numSupply \coloneq |\srInput \setminus \srTarget| = |\srTarget \setminus \srInput|$.
Finally, tiles on target nodes are called \emph{target tiles} and tiles on supply nodes are called \emph{supply tiles}.
Thus, to solve the \ShapeReconfigurationProblem{}, an agent needs to move all~$\numSupply$ supply tiles to demand nodes; see \zcref{fig:example_instance}.

\begin{figure}[tb]
    \begin{subfigure}{0.475\textwidth}%
        \centering%
        \includegraphics[width=\textwidth]{example_instance_initial}%
        \caption{}%
        \label{fig:example_instance_initial}
    \end{subfigure}%
    \hfill%
    \begin{subfigure}{0.475\textwidth}%
        \centering%
        \includegraphics[width=\textwidth]{example_instance_finished}%
        \caption{}%
        \label{fig:example_instance_finished}%
    \end{subfigure}%
    \caption{%
        An example instance of the \ShapeReconfigurationProblem{}.
        The light blue line encircles the target shape~$\srTarget$.
        The blue tiles are target tiles.
        The yellow tiles are supply tiles and need to be moved to untiled target nodes (demand nodes).
        \subref{fig:example_instance_initial}:~The positions of the tiles in the initial shape~$\srInput$.
        \subref{fig:example_instance_finished}:~The final shape after all supply tiles have been moved to the target shape~$\srTarget$.
        In this example, the target shape~$\srTarget$ is simply connected while the initial shape~$\srInput$~is~not.
    }%
    \label{fig:example_instance}
\end{figure}

At any time~$t$, the agent can determine whether $\robotnode[t] \in \srTarget$ when it is in the \emph{look} phase of a look-compute-move cycle.
In a practical implementation, this node distinguishability could be realized with a simple binary signal from the outside, e.g., from a light source.
We also assume that the agent can see which of its neighboring nodes are target nodes.
This assumption does not make our agent more powerful than an agent $\robot'$ that can only query~$\srTarget$ for its own position as~$\robot'$ could simply visit all six adjacent nodes within a constant number of steps.
Some adjacent nodes may be unreachable without violating the connectivity requirement, but our agent ignores these nodes in the presented algorithms.

We assume that the initial set of tiled target nodes $\srInput \cap \srTarget$ is non-empty and connected.
This allows the agent to carry supply tiles to demand nodes via paths over target tiles that do not need to be lifted.
Such assumptions are common in the literature, e.g.,~\cite{kostitsyna2023fastreconfiguration}.
Consider the problem instance shown in \zcref{fig:disconnected_intersect}.
Here, the supply tile cannot be moved directly to the demand node without breaking connectivity or without first moving a target tile.

\begin{figure}[tb]
    \centering%
    \includegraphics[width=0.35\textwidth]{disconnected_intersect}%
    \caption{The initial set of tiled target nodes $\srInput \cap \srTarget$ is not connected.}%
    \label{fig:disconnected_intersect}%
\end{figure}

\subsection{Boundaries and Boundary Traversal}

Let $S \subseteq \vertices$ be a finite subset of nodes (a \emph{shape}).
Consider the node sets $M_1, \dots, M_k \subseteq \vertices \setminus S$ of all connected components of $\graph[\vertices \setminus S]$.
If~$S$ is simply connected, then~$k = 1$.
We define the \emph{boundary} of~$M_i$ as $\srBoundary^S(M_i) \coloneq \bigcup_{v \in M_i} \nbrhood(v) \cap S$ and refer to $M_i$ as the \emph{outside} of the boundary $\srBoundary^S(M_i)$.
Let~$M^\ast$ be the unique set of infinite size among the~$M_i$.
We refer to~$M^\ast$ as the \emph{outside} of the shape~$S$, to~$\srBoundary^S(M^\ast)$ as the \emph{main boundary} of~$S$ and to~$\srBoundary^S(M_i)$ as an \emph{inner boundary} for any $M_i \neq M^\ast$.
For ease of presentation, we write $\srBoundary(S) \coloneq \srBoundary^S(M^\ast)$.
Let~$w \in S$ and let $M_w^\ast$ be a set of maximum size (not necessarily infinite) among all~$M_i$ that contain a node adjacent to~$w$ in~$\graph$.
Then $\srBoundary[w](S) \coloneq \srBoundary^S(M_w^\ast)$ is called a $w$\nobreakdash-boundary of~$S$.
Note that $\srBoundary[w](S)$ is not uniquely defined in general.
We refer to~$\srBoundary(\tiles)$ as the \emph{tile boundary}, to~$\srBoundary(\srTarget)$ as the \emph{target boundary}, to~$\srBoundary(\srTargetTiles)$ as the \emph{target tile boundary}, and to $\srBoundary[w](\srSupply)$ as the \emph{boundary of a supply component} for a supply node~$w \in \srSupply$.
See \zcref{fig:boundaries} for examples.

\begin{figure}[tb]
    \begin{subfigure}{0.32\textwidth}%
        \centering%
        \includegraphics[width=\textwidth]{boundaries_target}%
        \caption{}%
        \label{fig:boundaries_target}%
    \end{subfigure}%
    \hfill%
    \begin{subfigure}{0.32\textwidth}%
        \centering%
        \includegraphics[width=\textwidth]{boundaries_target_tile}%
        \caption{}%
        \label{fig:boundaries_target_tile}%
    \end{subfigure}%
    \hfill%
    \begin{subfigure}{0.32\textwidth}%
        \centering%
        \includegraphics[width=\textwidth]{boundaries_supply_component}%
        \caption{}%
        \label{fig:boundaries_supply_component}%
    \end{subfigure}%
    \caption{%
        Boundaries \subref{fig:boundaries_target}~$\srBoundary(\srTarget)$, \subref{fig:boundaries_target_tile}~$\srBoundary(\srTargetTiles)$, and \subref{fig:boundaries_supply_component}~$\srBoundary[\robotnode](\srSupply)$.
        The agent~$\robot$ is on node~$\robotnode$.
    }%
    \label{fig:boundaries}%
\end{figure}

The algorithms presented in this paper rely on the agent~$\robot$'s ability to traverse boundaries by the so-called \emph{left-hand rule}~(LHR):
For some boundary~$B \subseteq S$ of a shape~$S$ with outside~$M$, $\robot$ moves along nodes in~$B$ by always keeping the outside~$M$ to its left.
Concretely, $\robot$ encodes an ``outside pointer'' $\outsidePointer \in \directions$ in its internal states that points to an outside node $w \in M$.
To compute the next movement direction, $\robot$ picks the smallest positive integer $k \in \mathbb{N}$ such that $\robotnode + \vv{\outsidePointer + k} \in S$, moves in direction~$\outsidePointer + k$, and updates~$\outsidePointer$ to $\outsidePointer + k - 2$ to ensure that~$\outsidePointer$ continues to point to an outside node $\tilde{w} \in M$, which is not necessarily equal to~$w$.
The \emph{right-hand rule} (RHR) works analogously.
Both rules are visualized in \zcref{fig:lhr_rhr}.

A hole-free shape~$S$ has only one boundary~$\srBoundary(S)$, so~$\robot$ only needs to find two adjacent nodes $v,w$ with $v \in S$ and $w \in \vertices \setminus S$ to traverse~$\srBoundary(S)$ as~$w$ always belongs to its outside.
Shapes with holes have multiple boundaries, so~$\robot$ cannot locally distinguish a node $w \in \vertices \setminus S$ on the outside of $\srBoundary(S)$ from a node $\tilde{w} \in \vertices \setminus S$ on the outside of some inner boundary~$B' \neq \srBoundary(S)$.

For simplicity, we do not explicitly mention the outside pointer $\outsidePointer \in \directions$ in our algorithm descriptions and instead just write that the agent traverses a boundary by the LHR (or the RHR).
We give more details in proofs when necessary.

\begin{figure}[tb]
    \begin{subfigure}{0.23\textwidth}%
        \centering%
        \includegraphics[width=\textwidth]{lhr_outside_a}%
        \caption{}%
        \label{fig:lhr_outside_a}%
    \end{subfigure}%
    \hfill%
    \begin{subfigure}{0.23\textwidth}%
        \centering%
        \includegraphics[width=\textwidth]{lhr_outside_b}%
        \caption{}%
        \label{fig:lhr_outside_b}%
    \end{subfigure}%
    \hfill%
    \begin{subfigure}{0.23\textwidth}%
        \centering%
        \includegraphics[width=\textwidth]{rhr_outside_a}%
        \caption{}%
        \label{fig:rhr_outside_a}%
    \end{subfigure}%
    \hfill%
    \begin{subfigure}{0.23\textwidth}%
        \centering%
        \includegraphics[width=\textwidth]{rhr_outside_b}%
        \caption{}%
        \label{fig:rhr_outside_b}%
    \end{subfigure}%
    \caption{%
        An agent traversing the tile boundary~$\srBoundary(\tiles)$. \subref{fig:lhr_outside_a}--\subref{fig:lhr_outside_b}:~LHR. \subref{fig:rhr_outside_a}--\subref{fig:rhr_outside_b}:~RHR.
    }%
    \label{fig:lhr_rhr}%
\end{figure}

\section{Simply Connected Target Shapes}
\label{sec:target_no_holes}

We first present a worst-case optimal algorithm to solve the \ShapeReconfigurationProblem{} for simply connected target shapes.
Note that the initial shape~$\srInput$ may contain holes.
We begin by showing how the agent~$\robot$ can find the target tile boundary from any initial position.
This will then allow us to assume that~$\robot$ is initially located at the target tile boundary, i.e., $\robotnode[0] \in \srBoundary(\srTargetTiles)$, and is equipped with a pointer $\outsidePointer \in \directions$ to the outside of $\srBoundary(\srTargetTiles)$, i.e., $\robotnode[0] + \vv{\outsidePointer} \notin \srTargetTiles$, in the remainder of the section.

\subsection{Finding the Target Tile Boundary}
\label{sec:target_no_holes:find_boundary}

We first show that the agent~$\robot$ can find the target tile boundary $\srBoundary(\srTargetTiles)$ and its outside~$M$ from any initial position $\robotnode[0] \in \srInput$ if the target shape~$\srTarget$ is simply connected.
The naive approach of moving in a fixed direction does not work as the tile shape may initially contain holes that prevent the agent from reaching the target tile boundary.
We present strategies for exploring shapes with holes in \zcref{sec:target_with_holes}, but they come with a significant runtime penalty.
Instead, for the first algorithm, we make use of the fact that $\srTarget$ is simply connected, which implies that any tiled target node $v \in \srTargetTiles$ with a non-target node neighbor $w \in \vertices \setminus \srTarget$ lies on the target tile boundary, i.e., $v \in \srBoundary(\srTargetTiles)$ and~$w \in M$.

Thus, the agent can execute an existing algorithm where all tiled nodes are eventually visited, and stop as soon as it finds a tiled target node with a non-target neighbor to initialize the outside pointer for the traversal.
We use the \blockAlg{} algorithm by Gmyr et al.~\cite[Theorem~4]{gmyr2020forming} because of its simplicity, but other shape formation algorithms would work as well.
Note that~$\robot$ stops as soon as it moves a tile from a target node to a non-target node (or the other way around), so the number of supply tiles remains $\numSupply \pm 1 = \bigO(\numSupply)$ when~$\robot$ stops the shape \emph{formation} algorithm and begins executing the shape \emph{reconfiguration} algorithm.
Since the agent moves some tiles during the execution of \blockAlg{}, we also need to ensure that the target tile shape~$\srTargetTiles$ remains connected afterward, as the reconfiguration algorithm relies on this assumption.

\begin{lemma}
    \label{lem:find_target_boundary}
    The agent can find the target tile boundary in $\bigO(\numSupply \numTiles)$ time steps on instances of the \textsl{\ShapeReconfigurationProblem{}} with simply connected target shapes, maintaining connectivity of the target tile shape.
\end{lemma}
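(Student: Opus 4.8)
The plan is to run the \blockAlg{} algorithm of Gmyr et al.\ --- whose execution is guaranteed to visit every tiled node --- and to equip the agent with two additional halting triggers: it stops the instant it (a)~stands on a target tile with a non-target neighbor, recording the direction to that neighbor as the outside pointer~$\outsidePointer$, or (b)~would carry a tile across the target boundary. I then need to verify three things: that the recorded node lies on $\srBoundary(\srTargetTiles)$ with a valid pointer (correctness), that a trigger fires within $\bigO(\numSupply \numTiles)$ steps (runtime), and that $\srTargetTiles$ is still connected upon halting.

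Correctness of the recorded configuration rests on one geometric observation that uses simple connectivity of~$\srTarget$: since $\vertices \setminus \srTarget$ is connected and infinite and $\srTargetTiles \subseteq \srTarget$, we have $\vertices \setminus \srTarget \subseteq \vertices \setminus \srTargetTiles$, so every non-target node lies in the unique infinite component~$M^\ast$ of $\graph[\vertices \setminus \srTargetTiles]$, i.e.\ on the outside of the target tile boundary. Hence any target tile adjacent to a non-target node necessarily belongs to $\srBoundary(\srTargetTiles)$, and that non-target neighbor is a correct outside node for~$\outsidePointer$; since this holds at every time step, trigger~(a) always produces a valid boundary node. For reachability, note that whenever $\numSupply \geq 1$ both $\srTargetTiles$ and $\srSupply$ are non-empty, so connectivity of the current tile set forces an edge between a target tile and a supply tile --- that is, a target tile with a non-target neighbor is present at every step before the first boundary crossing. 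As \blockAlg{} visits all tiled nodes, trigger~(a) must eventually fire if trigger~(b) has not, and the procedure terminates on a valid boundary node. (If $\numSupply = 0$ then $\srInput = \srTarget$ and nothing is to be done.)

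The runtime and connectivity bounds are the crux. Trigger~(b) ensures at most one tile is carried across the target boundary before halting, so $|\srTargetTiles|$ changes by at most one and the supply count remains $\numSupply \pm 1 = \bigO(\numSupply)$. For the time bound I would analyze the tiles \blockAlg{} relocates before a trigger fires: no target tile it lifts can have a non-target neighbor, since otherwise trigger~(a) would already have fired when the agent arrived on that tile; hence every lifted tile is either a supply tile or a target tile interior to~$\srTarget$. Leaning on the order in which \blockAlg{} pulls tiles off the tile boundary, I would then argue that at most $\bigO(\numSupply)$ such relocations can occur before a target tile with a non-target neighbor is exposed on the boundary or a crossing is attempted; since each relocation costs an $\bigO(\numTiles)$ boundary traversal, this yields $\bigO(\numSupply \numTiles)$. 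Connectivity of $\srTargetTiles$ would follow by combining the initial connectivity of $\srInput \cap \srTarget$ with a check that the disturbed tiles are confined either to the non-target region or to interior positions whose relocation leaves the remaining target tiles connected. I expect precisely this counting-and-connectivity step --- converting the generic $\bigO(\numTiles)$ relocation bound of unrestricted block formation into a $\bigO(\numSupply)$ bound by exploiting the structure of~$\srTarget$, while controlling how interior relocations affect connectivity of $\srTargetTiles$ --- to be the principal obstacle.
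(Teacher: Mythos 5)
Your overall plan---run \blockAlg{} with early-stopping triggers and exploit simple connectivity of $\srTarget$ to certify that any tiled target node with a non-target neighbor lies on $\srBoundary(\srTargetTiles)$---is exactly the paper's strategy, and your geometric justification of the outside pointer is correct. However, the proof has a genuine gap at precisely the point you yourself flag as ``the principal obstacle'': you never establish that only $\bigO(\numSupply)$ tiles are relocated, you only announce that you would argue it by ``leaning on the order in which \blockAlg{} pulls tiles off the tile boundary.'' That route is not the one that works, and it conflates the number of distinct tiles moved with the number of lift--place events. The paper's argument counts by \emph{destinations}: every placement lands on a previously unoccupied node, which is either one of the $\numSupply$ demand nodes or an empty non-target node, and in the latter case the agent stops immediately; a short argument about re-occupied demand nodes then bounds the number of moved tiles by $\numSupply+1$. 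The per-tile cost is bounded separately: because \blockAlg{} carries each tile in the fixed direction \dirSE{}, a displacement exceeding $\numTiles$ would force the tile's path to leave the finite target shape and trigger a stop, so each tile travels at most $\numTiles$ steps. Combined with the cited $\bigO(t+\numTiles)$ bound on search time in terms of the carrying time~$t$, this yields $\bigO(\numSupply\numTiles)$; your proposed accounting of ``$\bigO(\numSupply)$ relocations, each costing an $\bigO(\numTiles)$ boundary traversal'' has no supporting argument for the first factor.

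The connectivity claim is likewise left as a gesture, and your route differs from the paper's in a way that creates loose ends. Your trigger~(b) halts the agent \emph{before} it carries a tile across the target boundary, which leaves it holding a tile lifted from a target node, so $\srTargetTiles$ is momentarily missing that node and something must still be said or done; and your claim that interior relocations ``leave the remaining target tiles connected'' is asserted, not checked. The paper instead lets the crossing happen, observes that this is the only event that can disconnect $\srTargetTiles$ (since \blockAlg{} preserves connectivity of the full tile set and the agent starts on a target tile), and then explicitly repairs it: the agent lifts the offending tile at the non-target node~$w$, carries it back in direction \dirNE{} to the demand node it came from, places it, and returns to the boundary. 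You should either adopt such a repair step or fully verify that your triggers preclude every disconnection event, including the interior-relocation case. Finally, the paper handles the case $\robotnode[0]\in\srSupply[0]$ separately (the search is confined to the agent's supply component, costing $\bigO(\numSupply^2)=\bigO(\numSupply\numTiles)$), which your write-up covers only implicitly.
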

\begin{proof}
    The agent~$\robot$ is done as soon as it either enters a tiled target node with a non-target node neighbor or a supply node with a tiled target node neighbor.
    This trivially happens within $\bigO(\numTiles D)$ time steps where $D$ is the diameter of the initial shape~$\srInput = \tiles[0]$ since \blockAlg{} terminates within $\bigO(\numTiles D)$ time steps during which~$\robot$ visits every tiled node~\cite{gmyr2020forming}.
    Assume $\robot$ is initially placed on a supply node $\robotnode[0] \in \srSupply[0]$.
    Let~$\tilde{\numSupply}$ be the size of the agent's supply component and let $\tilde{D}$ be the component's diameter.
    Then, $\robot$ finds the target tile boundary within $\bigO(\tilde{\numSupply} \tilde{D}) = \bigO(\numSupply^2) = \bigO(\numSupply \numTiles)$ time steps.
    In this case, $\robot$ does not lift any tile from a target node, so $\srTargetTiles$ remains connected.

    Now assume~$\robot$ is initially placed on a tiled target node~$\robotnode[0] \in \srTargetTiles[0]$.
    Observe that at most~$\numSupply + 1$ tiles are moved before the agent stops:
    Each tile is moved to an unoccupied node, which is either one of the~$\numSupply$ demand nodes or an empty non-target node.
    As soon as a tile is moved to a non-target node, the agent stops.
    While it is possible that a tile is moved to a demand node that was previously occupied by a different tile, that tile must have been moved to a different demand node before.

    Additionally, each of the moved tiles is moved by no more than~$\numTiles$ steps:
    In \blockAlg{}, tiles are moved in a fixed direction (\dirSE) to the next unoccupied node.
    If a tile is moved farther than~$\numTiles$ from its starting position, at least one node on the tile's path must be a non-target node and the agent stops.

    Thus, the number of time steps during which~$\robot$ carries a tile is bounded by $(\numSupply + 1) \cdot \numTiles = \bigO(\numSupply \numTiles)$.
    Finally, Gmyr et al.~\cite{gmyr2020forming} already showed that the number of time steps an agent executing \blockAlg{} spends searching for tiles is bounded by $\bigO(t + \numTiles)$ where~$t$ is the number of steps spent carrying tiles.

    It remains to argue that the agent maintains connectivity of $\srTargetTiles$.
    First note that \blockAlg{} is connectivity-preserving.
    Thus, as $\robotnode^0 \in \srTargetTiles[0]$, the target tile shape $\srTargetTiles$ can only become disconnected if~$\robot$ carries a tile from a target node $v \in \srTarget$ to a non-target node $w \in \vertices \setminus \srTarget$.
    At this point, $\robot$ has found $\srBoundary(\srTargetTiles)$ on the path from $v$ to $w$.
    To restore connectivity, $\robot$ lifts the tile at~$w$, moves in direction $\dirNE$ until $\robotnode \in \srDemand$, i.e., $\robotnode = v$, and places its carried tile.
    Then, $\robot$ moves back in direction $\dirSE$ until $\robotnode + \vecSE \notin \srTarget$, ending at $\srBoundary(\srTargetTiles)$.
\end{proof}

\subsection{Simply Connected Reconfiguration}
\label{sec:target_no_holes:algorithm}

On a high level, the algorithm works as follows:
The agent first traverses the target tile boundary~$\srBoundary(\srTargetTiles)$ by the LHR until it finds a connected component of supply tiles (a \emph{supply component}), see \zcref{fig:traversal}.
This happens because the target shape~$\srTarget$ is simply connected, which ensures that every supply component is adjacent to~$\srBoundary(\srTargetTiles)$.
The agent moves to the component and traverses it until it finds a safely removable supply tile, i.e., a supply tile that can be lifted and carried away without violating connectivity.
A deterministic finite automaton cannot always find safely removable tiles on tile shapes with holes whereas finding tiles that can safely be moved to adjacent nodes is possible~\cite{gmyr2020forming}.
Therefore, we require the agent to reconfigure the supply component while looking for removable tiles.
In particular, it compacts the supply component by moving supply tiles away from the outside of the component's boundary~$\srBoundary[\robotnode](\srSupply)$ whenever possible.
This way, it ``creates'' safely removable tiles which can always be lifted without disconnecting the supply component; see \zcref{fig:supply_compaction}.
A supply node $v \in \srSupply$ is called \emph{free} if the set of supply nodes adjacent to~$v$ is connected.
A tile on a free supply node is called a \emph{free supply tile}.

After the agent lifts such a tile, it returns to the target shape and traverses it until it reaches a demand node where it can place its carried tile.
Traversing the boundary of the target shape is not sufficient as some components of demand nodes may be fully enclosed by tiled target nodes.
To find them, the agent traverses all \emph{columns} of the target shape, which are defined as follows:
For a direction~$d$ and a shape~$S \subseteq \vertices$, a $d$-column is a path of nodes $(v_1, \dots, v_k)$ with $v_i \in S$ such that $v_{i+1} = v_i + \vv{d}$ for every $1 \leq i < k$ and $v_1 + \vv{d + 3}, v_k + \vv{d} \notin S$.
The node~$v_1$ is called the \emph{start} and the node~$v_k$ is called the \emph{end} of the~$d$\nobreakdash-column.
Since~$\srTarget$ is simply connected, all column start nodes lie on the target boundary~$\srBoundary(\srTarget)$.
Thus, the agent can simply traverse~$\srBoundary(\srTarget)$ by the LHR and traverse a column whenever it enters the column's start node until it eventually finds a demand node where it can place its carried tile, see \zcref{fig:traversal}.
After a demand node is found, the agent repeats the above steps to move the next supply tile to a demand node.

\begin{figure}[tb]
    \begin{subfigure}{0.47\textwidth}%
        \centering%
        \includegraphics[width=\textwidth]{find_supply}%
        \caption{}%
        \label{fig:traversal_a}%
    \end{subfigure}%
    \hfill%
    \begin{subfigure}{0.47\textwidth}%
        \centering%
        \includegraphics[width=\textwidth]{traverse_columns}%
        \caption{}%
        \label{fig:traversal_b}%
    \end{subfigure}%
    \caption{%
        \subref{fig:traversal_a}:~The agent traverses the target tile boundary~$\srBoundary(\srTargetTiles)$ by the LHR until it finds a supply tile.
        \subref{fig:traversal_b}:~After lifting a supply tile, the agent traverses the target boundary $\srBoundary(\srTarget)$ and the target shape's columns in phases \phase{traverseBoundaryNoHoles} and \phase{traverseColumnNoHoles} until it reaches a demand node.
        Here, $\columnPointer = \dirN$.
    }%
    \label{fig:traversal}%
\end{figure}

\begin{figure}[tb]
    \centering%
    \includegraphics{phase_transitions_no_holes}%
    \caption{Phase transitions of the algorithm for simply connected target shapes.}%
    \label{fig:phase-transitions-no-holes}%
\end{figure}

\begin{figure}[tb]
    \begin{subfigure}{0.31\textwidth}%
        \centering%
        \includegraphics[width=\textwidth]{supply_compaction_a}%
        \caption{}%
        \label{fig:supply_compaction_a}%
    \end{subfigure}%
    \hfill%
    \begin{subfigure}{0.31\textwidth}%
        \centering%
        \includegraphics[width=\textwidth]{supply_compaction_b}%
        \caption{}%
        \label{fig:supply_compaction_b}%
    \end{subfigure}%
    \hfill%
    \begin{subfigure}{0.31\textwidth}%
        \centering%
        \includegraphics[width=\textwidth]{supply_compaction_c}%
        \caption{}%
        \label{fig:supply_compaction_c}%
    \end{subfigure}%
    \caption{%
        Phase \phase{compactSupplyNoHoles}.
        The agent~$\robot$ traverses~$\srBoundary[\robotnode](\srSupply)$ (gray dashed line) until it enters a supply tile in~\subref{fig:supply_compaction_a} that can be moved inward (node~$v_i$ in the proof of \zcref{lem:compact_supply}).
        Direction~$d = \dirN$ is the next LHR movement direction, but the highlighted node in direction $d + 1 = \dirNE$ is untiled, so~$\robot$ moves its current tile there and then continues its traversal by moving in direction $d - 1 = \dirNW$.
        In~\subref{fig:supply_compaction_b}, the agent continues the same process for the following two tiles.
        Finally, $\robot$ ends on a free supply node in~\subref{fig:supply_compaction_c} (node~$v_j$ in the proof of \zcref{lem:compact_supply}).
    }%
    \label{fig:supply_compaction}
\end{figure}

We now give a more detailed description of the algorithm, divided into four phases (in addition to the initial target boundary location phase, see \zcref{sec:target_no_holes:find_boundary}).
The agent starts in phase \phase{findSupplyNoHoles} with $\robotnode[0] \in \srBoundary(\srTargetTiles)$ and is equipped with a pointer to the boundary's outside.
See \zcref{fig:phase-transitions-no-holes} for a diagram of the algorithm's phase transitions.

\begin{itemize}
    \item \newPhase{findSupplyNoHoles}{Find\-Supply}:
        The agent traverses the target tile boundary $\srBoundary(\srTargetTiles)$ by the LHR until it is adjacent to a supply node $v \in \srSupply$.
        It moves to~$v$ and enters phase \phase{compactSupplyNoHoles}.
    \item \newPhase{compactSupplyNoHoles}{Com\-pact\-Supply}:
        The agent~$\robot$ moves one step along the boundary $\srBoundary[\robotnode](\srSupply)$ of the supply component by the LHR and computes the direction~$d$ of the next node of the boundary traversal.
        If the node in direction $d + 1$ is neither tiled nor a target node, the node in direction $d + 2$ holds a supply tile, and none of the nodes in directions $d - 1, \dots, d - 3$ hold a supply tile, $\robot$ lifts the tile at its current position, moves in direction~$d + 1$, places its carried tile, and moves in direction~$d - 1$.
        Otherwise, $\robot$ just moves in direction~$d$.
        As soon as~$\robot$ enters a free supply node, it lifts the free supply tile, moves to an adjacent supply node (if one exists), and traverses the boundary of the supply component~$\srBoundary[\robotnode](\srSupply)$ by the LHR until it is adjacent to a tiled target node $v \in \srTargetTiles$.
        Then, it stores the direction to~$v$ as~$\columnPointer$, moves to~$v$ and enters phase \phase{traverseColumnNoHoles}.
    \item \newPhase{traverseColumnNoHoles}{Tra\-verse\-Column}:
        The agent~$\robot$ moves in direction~$\columnPointer$ until it either enters an untiled node, in which case~$\robot$ places the tile it is holding, or the node in direction~$\columnPointer$ is not a target node.
        Then, $\robot$ moves in direction~$\columnPointer + 3$, i.e., the opposite direction of~$\columnPointer$, until there is no tiled target node in that direction. If~$\robot$ carries a tile, it enters phase~\phase{traverseBoundaryNoHoles}. Otherwise, it enters phase \phase{findSupplyNoHoles}.
    \item \newPhase{traverseBoundaryNoHoles}{Tra\-verse\-Bound\-ary}:
        The agent~$\robot$ traverses the target boundary~$\srBoundary(\srTarget)$ (at least one step) until it either enters an untiled node, in which case~$\robot$ places the carried tile and enters phase \phase{findSupplyNoHoles}, or the node in direction~$\columnPointer + 3$ is not a target node, in which case~$\robot$ enters phase \phase{traverseColumnNoHoles}.
\end{itemize}

\subsection{Correctness and Runtime of Simply Connected Reconfiguration}
\label{sec:target_no_holes:analysis}

We consider an agent~$\robot$ executing the algorithm given above on an arbitrary instance of the \ShapeReconfigurationProblem{} with a connected initial shape~$\srInput$ and a simply connected target shape~$\srTarget$, assuming that the number of supply tiles is $\numSupply > 0$;
otherwise, the problem is already solved.
We observe~$\robot$'s behavior in each phase in a series of lemmas before combining the results to show the correctness of the algorithm.

\begin{lemma}
    \label{lem:find_supply}
    If the agent is in phase \textsl{\phase{findSupplyNoHoles}} and the number of supply tiles is $\numSupply > 0$, it switches to phase \textsl{\phase{compactSupplyNoHoles}} within~$\bigO(\numTiles)$ time steps.
\end{lemma}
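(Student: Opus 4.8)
The plan is to establish two facts: (i) at least one node of the target tile boundary $\srBoundary(\srTargetTiles)$ is adjacent to a supply node, and (ii) a full left-hand-rule traversal of $\srBoundary(\srTargetTiles)$ visits every boundary node in $\bigO(\numTiles)$ time steps. These together yield the claim, since the agent stops as soon as it becomes adjacent to any supply node and then moves onto it in one additional step. I first record the precondition: when $\robot$ enters phase \phase{findSupplyNoHoles}, it sits on a node of $\srBoundary(\srTargetTiles)$ with an outside pointer $\outsidePointer$ directed at the infinite outside component $M^\ast$ of $\srTargetTiles$ (guaranteed by \zcref{lem:find_target_boundary} on first entry and maintained as an invariant by the phase transitions). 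Hence the LHR traversal is well-defined throughout and follows the main boundary of the connected set $\srTargetTiles$.

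For (i), I would combine connectivity of the configuration with simple connectivity of $\srTarget$. Since $\srTarget$ is simply connected, $\vertices \setminus \srTarget$ is connected and infinite; as every supply node lies in $\vertices \setminus \srTarget \subseteq \vertices \setminus \srTargetTiles$, all supply nodes belong to the infinite outside component $M^\ast$. Because $\numSupply > 0$, the whole tile set $\tiles$ is connected, and $\srTargetTiles \neq \emptyset$, any path of tiles from a target tile to a supply tile contains a target tile $b$ immediately followed by a supply node $u$. Then $b \in \srTargetTiles$ is adjacent to $u \in M^\ast$, so $b$ lies on $\srBoundary(\srTargetTiles)$ and witnesses the required adjacency.

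For (ii), I would invoke the standard boundary-traversal property set up in the preliminaries: a full LHR cycle traverses each boundary edge between $\srTargetTiles$ and $M^\ast$ exactly once and therefore visits every node of $\srBoundary(\srTargetTiles)$. As each node has at most six incident edges and $|\srTargetTiles| \leq \numTiles$, the number of such boundary edges, and hence the cycle length, is $\bigO(\numTiles)$. Combining both parts, $\robot$ recognizes an adjacent supply node (it sees which neighbors are tiled and which are target nodes, so a tiled non-target neighbor is detectable); in the worst case it traverses the boundary until reaching the node $b$ from step (i), which occurs within one full cycle of $\bigO(\numTiles)$ steps, and then moves onto the supply node to enter phase \phase{compactSupplyNoHoles}. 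The total is $\bigO(\numTiles)$.

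The main obstacle is step (i): arguing correctly that simple connectivity of $\srTarget$ forces every supply node into the \emph{infinite} outside component $M^\ast$ of the target tile boundary, which is precisely what guarantees the existence of a boundary node adjacent to a supply node and hence termination of the search. The runtime bound in (ii) is a routine consequence of the LHR traversal mechanics.
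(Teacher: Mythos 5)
Your proposal is correct and follows essentially the same route as the paper: the agent is on $\srBoundary(\srTargetTiles)$ with a valid outside pointer whenever it (re-)enters the phase, simple connectivity of $\srTarget$ places every supply node in the infinite outside component of $\srTargetTiles$ so that some boundary node is adjacent to a supply node, and one LHR cycle of length $\bigO(\numTiles)$ reaches it. You merely spell out the adjacency argument that the paper asserts without proof, which is a welcome addition rather than a deviation.
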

\begin{proof}
    The agent~$\robot$ is initially in phase \phase{findSupplyNoHoles} and it only re-enters the phase from phases \phase{traverseColumnNoHoles} or \phase{traverseBoundaryNoHoles}.
    In all three cases, $\robot$ is at the target tile boundary~$\srBoundary(\srTargetTiles)$.
    The first case is by assumption.
    For the other two cases, see the proof of \zcref{lem:traversal}.
    As all supply components are adjacent to~$\srBoundary(\srTargetTiles)$, $\robot$ finds a supply node during its LHR traversal.
\end{proof}

\begin{lemma}
    \label{lem:compact_supply}
    If the agent is in phase \textsl{\phase{compactSupplyNoHoles}}, it switches to phase \textsl{\phase{traverseColumnNoHoles}} carrying a tile within $\bigO(\numTiles)$ time steps.
\end{lemma}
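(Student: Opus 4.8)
The plan is to split the phase into two parts and bound each by $\bigO(\numTiles)$ steps: (i)~the LHR traversal of the supply boundary $\srBoundary[\robotnode](\srSupply)$ with inline compaction, which runs until the agent stands on a \emph{free} supply node; and (ii)~the cleanup after lifting that free tile, where the agent walks to an adjacent supply node, resumes the LHR traversal until it is next to a tiled target node $v \in \srTargetTiles$, stores $\columnPointer$, and steps onto~$v$. Since the agent lifts the free tile at the start of~(ii) and never places it again before the transition, it necessarily enters \phase{traverseColumnNoHoles} carrying a tile; the substance of the lemma is therefore the two runtime bounds, together with the invariant that compaction never disconnects the supply component.

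First I would record the local invariants of a compaction step. When the agent at~$\robotnode$ is about to move in the LHR direction~$d$ and the trigger fires, its supply neighbours are exactly those in directions~$d$ and~$d+2$ (the conditions forbid supply in~$d+1$ and in $d-1,d-2,d-3$, and $d+3 = d-3$), so~$\robotnode$ is a concave ``spike''. The displaced tile lands on $q \coloneq \robotnode + \vv{d+1}$, which is adjacent to both surviving neighbours $\robotnode+\vv{d}$ and $\robotnode+\vv{d+2}$; hence~$q$ bridges them and the supply component stays connected. A short vector computation gives $\vv{d+1}+\vv{d-1}=\vv{d}$, so after the step the agent sits on $\robotnode+\vv{d}$, exactly the node a plain LHR step would have reached. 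Consequently compaction does not alter the traversal order and, because~$q$ keeps both former edges of the spike, it does not increase the supply perimeter (the number of supply-to-nonsupply edges).

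The crux is to show that a free supply node is reached within $\bigO(\numTiles)$ steps. I would first note that free nodes always exist: the topmost and then westmost supply node has all of its supply neighbours among the three southern directions $\{\dirSW,\dirS,\dirSE\}$, which are contiguous, so it is free; moreover a free node and a compaction-eligible node are mutually exclusive (contiguous versus split neighbourhood), so the agent never compacts a free tile but lifts it. It then remains to argue reachability. Here I would use that the supply boundary is a closed walk of length $\bigO(\numTiles)$ (the component has at most~$\numTiles$ tiles) and that its total turning forces the existence of convex corners, which are precisely free nodes; since every non-free, non-spike node is simply skipped and every spike is filled (moving its tile strictly into the interior while not increasing the perimeter), the agent cannot traverse an entire loop without standing on a convex/free corner, giving the $\bigO(\numTiles)$ bound for part~(i).

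I expect this reachability step to be the main obstacle: I must rule out that repeated compactions re-create spikes ahead of the agent and thereby postpone a free node across more than $\bigO(1)$ loops. The clean way is to combine the non-increasing perimeter with the observation that a filled spike strictly enlarges the already-traversed convex region behind the agent, so a spike cannot be regenerated in front of it within the same pass; the small-component corner cases (a one- or two-tile supply component, where the agent lands on a free node essentially immediately) I would check directly. Finally, for part~(ii): after lifting a free tile the remaining supply nodes stay connected (definition of free), and because the supply component borders $\srBoundary(\srTargetTiles)$—as~$\srTarget$ is simply connected, the fact used to enter the phase via \zcref{lem:find_supply}—one further LHR loop of length $\bigO(\numTiles)$ reaches an adjacent target tile (with the degenerate single-tile component being adjacent to a target tile already). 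Summing the two $\bigO(\numTiles)$ parts yields the claim.
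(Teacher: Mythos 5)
Your decomposition, the local analysis of a single compaction step (connectivity via the bridge node $q = \robotnode + \vv{d+1}$, the identity $\vv{d+1}+\vv{d-1}=\vv{d}$, free vs.\ compaction-eligible being mutually exclusive), and part~(ii) all match the paper. However, the step you yourself flag as ``the main obstacle''---that the agent reaches a free supply node within a single $\bigO(\numTiles)$-length traversal---is exactly where your argument has a genuine gap, and two of the claims you lean on are false. First, free nodes do \emph{not} always exist: take a supply component consisting of six tiles forming a hexagonal ring around a single untiled non-target node. Every ring tile has exactly two supply neighbors separated by $120^\circ$ (e.g., one to the \dirSE{} and one to the \dirSW{}, which are not adjacent to each other on the triangular lattice), so no tile is free; in particular your ``topmost, then westmost'' candidate is not free, since its supply neighbors need not include the \dirS-neighbor. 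Second, ``total turning forces convex corners, which are precisely free nodes'' fails for the same example: the full $360^\circ$ of turning can be realized entirely by $60^\circ$ right turns, and a $60^\circ$-turn node is not free in general (only $120^\circ$ and $180^\circ$ turns guarantee freeness). Your fallback sketch---non-increasing perimeter plus ``a filled spike cannot be regenerated in front of the agent within the same pass''---is not substantiated, and a merely non-increasing potential cannot by itself yield a step bound.

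The paper closes this gap with a different combinatorial argument: since the turning angles along the boundary walk sum to $360^\circ$, either some turn is $120^\circ$ or $180^\circ$ (immediately a free node), or there exist at least five pairs of $60^\circ$ right turns separated only by straight moves (``two-streaks''). Along the run between such a pair, the compaction propagates: after compacting at $v_i$, the displaced tile sits in direction $d+2$ from $v_{i+1}$ and the now-empty $v_i$ sits in direction $d-3$, so each straight-move node in the run becomes compaction-eligible in turn, and the chain terminates at the second right turn $v_j$, which then has only two mutually adjacent supply neighbors and is therefore free. Hence a free node is reached within the \emph{first} traversal of the (original) boundary path, giving $\bigO(\numTiles)$. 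One further point you omit: to preserve connectivity of the whole configuration (not just of the supply component) when the free tile is lifted, one must argue that the entry node $w \in \srSupply$ adjacent to the target tile boundary is never the node whose tile is lifted; the paper gets this from the fact that a free node is found strictly before the traversal returns to $w$, so the component stays attached to the target tiles via~$w$.
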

\begin{proof}
    The agent~$\robot$ enters phase \phase{compactSupplyNoHoles} when it moves from a tiled target node to a supply node in phase \phase{findSupplyNoHoles}, i.e., ${\robotnode[t - 1] \in \srTargetTiles}$ and ${\robotnode[t] \in \srSupply[t]}$.
    Now, $\robot$ sets an internal variable $\outsidePointer \in \directions$ pointing in the direction of $\robotnode[t - 1]$.
    Note that $\outsidePointer$ points to the outside of $\srBoundary[{\robotnode[t]}](\srSupply)$, allowing~$\robot$ to traverse this boundary by the LHR.
    During the traversal, $\outsidePointer$ is continuously updated.
    We first show that~$\robot$ encounters a free supply node during phase \phase{compactSupplyNoHoles}.

    Let $P = (v_1, \dots, v_k)$ with $v_1 = \robotnode[t]$ be the path of an LHR traversal around the supply component boundary $\srBoundary[{\robotnode[t]}](\srSupply)$ where the traversal repeats after~$v_k$.
    Let $\alpha_i \in (-180, 180]$ be the degree by which~$\robot$ has to turn to the right to move from $v_i$ to $v_{i+1}$ (or $v_k$ to $v_1$).
    A negative~$\alpha_i$ corresponds to a left turn with degree~$|\alpha_i|$.
    All possible turns are~shown~in~\zcref{fig:turn_degrees}.

    \begin{figure}[tb]
        \centering%
        \includegraphics[width=0.4\textwidth]{turn_degrees}%
        \caption{%
            The possible turn degrees~$\alpha_i$ and the corresponding successor nodes~$v_{i+1}$ in the path~$P$ in the proof of~\zcref{lem:compact_supply}.
            The darker gray tiles are the agent's predecessor nodes on~$P$ up to~$v_i$ and the lighter gray tiles are potential successors~$v_{i+1}$.
            The two untiled nodes to the southwest of~$v_i$ are on the outside of the boundary and cannot hold tiles in an LHR traversal.
            We can see that $\alpha_i \in \{-60, 0, 60, 120, 180\}$.
            The agent chooses the smallest~$\alpha_i$ such that the corresponding node is tiled.
            Thus, once~$\alpha_i$ and~$v_i$ are chosen, any node in turn direction $\alpha_i' < \alpha_i$ is untiled.
        }%
        \label{fig:turn_degrees}%
    \end{figure}

    Note that if $\alpha_i = 120$ or $\alpha_i = 180$ for any~$i$, the corresponding node~$v_i$ is a free supply node.
    This can be seen in \zcref{fig:turn_degrees}:
    All nodes before~$\alpha_i$ in clockwise order (with $\alpha_i' < \alpha_i)$ are untiled and connected.
    A left turn with $\alpha_i = -120$ is impossible since~$v_{i+1}$ is adjacent to~$v_{i-1}$ and would be the LHR successor of~$v_{i-1}$ instead of~$v_i$.

    Thus, it remains to show that~$\robot$ encounters a free supply node if $|\alpha_i| \leq 60$ for all $i \in \{1,\dots,k\}$.
    Since~$P$ is a clockwise circular path, the sum of the~$\alpha_i$ is $360$.
    At least six right turns are necessary to complete a cumulative $360$~degree turn and every left turn only increases the number of required right turns, so there must be at least five ``two-streaks'' of consecutive right turns, i.e., there must be five pairs $(i, j)$ with $\alpha_i = \alpha_j = 60$ and $\alpha_{i'} = 0$ for $i < i' < j$.
    Let $(i, j)$ be such a pair with the agent~$\robot$ being positioned on~$v_i$ and let~$d$ be the next movement direction of~$\robot$ as in the description of phase \phase{compactSupplyNoHoles}, i.e., $v_i + \vv{d} = v_{i + 1}$.
    Since $\alpha_i = 60$, $\robot$ entered node~$v_i$ from direction $d + 2$.
    By the LHR, $v_i$ can only have supply node neighbors in directions~$d$, $d + 1$, and $d + 2$.
    If there is a tile in direction $d + 1$, $v_i$ is free.
    Thus, assume $v_i + \vv{d + 1}$ is untiled.
    If $v_i + \vv{d + 1} \in \srTarget$, then the node $v_i + \vv{d + 1}$ is not part of a hole in the supply component, so $\robot$ does not need to perform a compaction and can continue its traversal.
    Otherwise, $\robot$, lifts the tile at~$v_i$, moves in direction $d + 1$ and places the tile.
    Then, $\robot$ moves in direction~$d - 1$ and enters~$v_{i + 1}$ as visualized in \zcref{fig:supply_compaction_b}.
    It can easily be verified that $\vv{d + 1} + \vv{d - 1} = \vv{d}$ for all $d \in \directions$.

    Now note that if $\alpha_{i + 1} = 0$, $d$ remains unchanged, $v_{i+1} + \vv{d + 1}$ must be untiled by the same arguments as before, and~$\robot$ entered~$v_{i+1}$ from direction $d + 2$ (opposite direction of $d - 1$), so~$\robot$ repeats the same movement pattern as on~$v_i$.
    This remains true for all $v_{i'}$ with~$i < i' < j$.
    In \zcref{fig:supply_compaction_b}, the highlighted nodes show where the tiles from the nodes~$v_{i'}$ are moved and the gray dashed line shows the agent's movement path.
    After~$\robot$ enters node~$v_j$, it has only two directly adjacent supply neighbors.
    Therefore, $\robot$ has reached a free supply node, see \zcref{fig:supply_compaction_c}.

    Next, we show that the agent does not disconnect the configuration by lifting a free supply tile.
    Note that connectivity can only be violated if this action disconnects the supply component from the target tile shape.
    The supply component remains connected by the definition of a free supply node.
    Without loss of generality, $|\srBoundary[{\robotnode[t]}](\srSupply)| > 1$.
    Otherwise, connectivity is not lost when lifting the only tile of the supply component and the lemma follows immediately.
    After entering phase \phase{compactSupplyNoHoles}, the agent immediately leaves the entry node $\robotnode[t] \in \srSupply$.
    As argued above, $\robot$ will find a free supply node before finishing a full LHR traversal of $\srBoundary[{\robotnode[t]}](\srSupply)$ and entering~$\robotnode[t]$ again.
    Thus, the tile at~$\robotnode[t]$ is not lifted and the supply component remains connected to the rest of the tile shape via~$\robotnode[t]$.

    Finally, we show that~$\robot$ switches to phase \phase{traverseColumnNoHoles} within~$\bigO(\numTiles)$ time steps after lifting the free supply tile.
    Note that the supply node $w \in \srSupply$ on which~$\robot$ enters phase \phase{compactSupplyNoHoles} is adjacent to the target tile boundary $\srBoundary(\srTargetTiles)$.
    As argued above, the tile at~$w$ is only lifted if the supply component contains no other nodes, in which case~$\robot$ is already adjacent to a tiled target node.
    Otherwise, after lifting a tile from a different node in phase \phase{compactSupplyNoHoles}, $\robot$ continues to traverse $\srBoundary[\robotnode](\srSupply)$ by the LHR and eventually returns to~$w$ where it moves to an adjacent tiled target node and switches to phase \phase{traverseColumnNoHoles}.
\end{proof}

\begin{lemma}
    \label{lem:traversal}
    If the agent is in phase \textsl{\phase{traverseColumnNoHoles}} or \textsl{\phase{traverseBoundaryNoHoles}}, it switches to phase \textsl{\phase{findSupplyNoHoles}} within~$\bigO(\numTiles)$ time steps.
\end{lemma}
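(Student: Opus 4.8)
The plan is to show that, whichever of the two phases $\robot$ starts in, it carries a tile throughout and performs a systematic sweep of $\srTarget$ that visits every target node, so that it necessarily enters a demand node, places its tile, and switches to \phase{findSupplyNoHoles}. First I would confirm that $\robot$ carries a tile in both phases: it enters \phase{traverseColumnNoHoles} from \phase{compactSupplyNoHoles} carrying a tile by \zcref{lem:compact_supply}, and the transitions between \phase{traverseColumnNoHoles} and \phase{traverseBoundaryNoHoles} neither lift nor place a tile (a tile is placed only on the transition to \phase{findSupplyNoHoles}). While $\robot$ carries a tile, exactly $\numTiles - 1$ tiles lie on the lattice, so $|\srTargetTiles| \le |\tiles| = \numTiles - 1 < \numTiles = |\srTarget|$, and hence $\srDemand \neq \emptyset$: a demand node is always available to receive the carried tile.

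Next I would establish coverage. The $\columnPointer$-columns partition $\srTarget$, and since $\srTarget$ is simply connected, every column start lies on the target boundary $\srBoundary(\srTarget)$, which $\robot$ traverses by the LHR in \phase{traverseBoundaryNoHoles}. A single closed LHR traversal of $\srBoundary(\srTarget)$ visits every boundary node, hence every column start; each time $\robot$ detects a column start (the node in direction $\columnPointer + 3$ is a non-target node) it runs \phase{traverseColumnNoHoles}, which walks the full column in direction $\columnPointer$ up to its end, checking each node and stopping to place its tile at the first untiled (demand) node it enters. Thus within one full traversal of $\srBoundary(\srTarget)$ every target node is entered and tested. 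The only wrinkle is the first column: $\robot$ may start \phase{traverseColumnNoHoles} at an interior node $v$ rather than a column start, leaving the part of $v$'s column below $v$ unswept until the boundary walk returns to that column's start; this still occurs within one boundary loop. Since $\srDemand \neq \emptyset$, $\robot$ eventually enters an untiled target node -- either directly on $\srBoundary(\srTarget)$ or inside a column -- places its tile, and switches to \phase{findSupplyNoHoles}.

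For the runtime, a full LHR traversal of the boundary of a simply connected shape of size $\numTiles$ has length $\bigO(\numTiles)$, since the triangular lattice has bounded degree and the shape has $\bigO(\numTiles)$ boundary edges. For the column detours I would bound how often each column is swept: a closed boundary walk passes through any fixed node only a constant number of times, so the start of a column $c$ triggers $\bigO(1)$ sweeps of $c$, each costing $\bigO(|c|)$ steps. As the columns partition $\srTarget$, summing gives $\sum_{c} \bigO(|c|) = \bigO(\numTiles)$ total column work; adding the single initial partial column sweep, the overall cost is $\bigO(\numTiles)$ steps before $\robot$ reaches a demand node.

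The main obstacle is the runtime rather than the coverage: I must rule out a quadratic blow-up from interleaving boundary steps with column detours. This rests on two points that need care. First, each column start is encountered only $\bigO(1)$ times during a single boundary loop -- the bounded-passes property above -- so no column is re-swept more than constantly often. Second, \phase{traverseBoundaryNoHoles} must advance at least one step along $\srBoundary(\srTarget)$ after each return from a column, with the outside pointer $\outsidePointer$ correctly re-established from the non-target node in direction $\columnPointer + 3$, ensuring $\robot$ makes monotone progress along the boundary walk rather than looping at a single column start.
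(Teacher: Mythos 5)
Your proposal is correct and follows essentially the same route as the paper's proof: the agent always carries a tile in these phases so a demand node exists, every column start lies on $\srBoundary(\srTarget)$ because $\srTarget$ is simply connected, the LHR boundary walk therefore reaches every column, and the $\bigO(\numTiles)$ bound follows because each column start is entered only $\bigO(1)$ times per boundary loop while the columns partition $\srTarget$. The only superfluous element is your ``first column'' wrinkle: the agent always enters \phase{traverseColumnNoHoles} at a column start, since $\columnPointer$ points away from a non-target supply node, so that case does not arise.
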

\begin{proof}
    The agent only enters phase \phase{traverseBoundaryNoHoles} from phase \phase{traverseColumnNoHoles}.
    Let~$t$ be a time at which~$\robot$ enters phase \phase{traverseColumnNoHoles}.
    Note that~$\robot$ always carries a tile when entering this phase and that it enters the phase at the start of a $\columnPointer$-column, i.e., $\robotnode[t] \in \srBoundary(\srTarget)$ and $\robotnode[t] + \vv{\columnPointer + 3} \notin \srTarget$.
    In phase \phase{traverseColumnNoHoles}, $\robot$ moves strictly in direction~$\columnPointer$ and all nodes on its path are target nodes.
    The agent stops either if it reaches the end of its column, i.e., the node in direction~$\columnPointer$ is not a target node, or if it enters a demand node.
    In the second case, $\robot$ places its carried tile.
    In both cases, $\robot$ moves back in direction~$\columnPointer + 3$ until it has returned to the column's start node~$\robotnode[t]$.

    If~$\robot$ has placed its tile during the column traversal, it switches to phase \phase{findSupplyNoHoles}.
    Otherwise, it switches to phase \phase{traverseBoundaryNoHoles} where it moves at least one step along~$\srBoundary(\srTarget)$ by the LHR.
    This is possible since~$\robot$ enters this phase from a column start node, i.e., the node $\robotnode + \vv{\columnPointer + 3} \notin \srTarget$ is on the outside of $\srBoundary(\srTarget)$ and~$\robot$ can set~$\outsidePointer$ accordingly.
    As soon as~$\robot$ enters the start node of another $\columnPointer$-column, it traverses that column as explained above.
    There is at least one demand node because~$\robot$ carries a tile and~$|\srTarget| = n$ and every demand node lies on some column whose start node lies on the target boundary~$\srBoundary(\srTarget)$ because~$\srTarget$ is simply connected.
    Therefore, $\robot$ eventually enters a demand node where it can place its tile.
   Then, it returns to the start node of the column, which lies on $\srBoundary(\srTarget)$ and thus also on $\srBoundary(\srTargetTiles)$, and enters phase \phase{findSupplyNoHoles}.

    Since $|\srBoundary(\srTarget)| = \bigO(\numTiles)$, the agent spends~$\bigO(\numTiles)$ time steps in phase \phase{traverseBoundaryNoHoles}.
    Each column is traversed twice in each direction whenever the agent reaches the start node of that column.
    This yields an additional $\bigO(c \numTiles)$ time steps, where~$c$ is the maximum number of times a column start node can be visited by~$\robot$ during a single $\srBoundary(\srTarget)$-traversal.
    It is easy to see that $c = \bigO(1)$, since for any column start node $v \in \srTarget$, the neighborhood~$\nbrhood(v)$ contains at most three disconnected non-target nodes along which~$\robot$ can enter~$v$ by the LHR.
    Thus, $\robot$ enters phase \phase{findSupplyNoHoles} within $\bigO(\numTiles)$ time steps.
\end{proof}

The correctness and runtime of the algorithm follow.

\begin{theorem}
    \label{thm:target_no_holes}
    The agent can solve an instance of the \textsl{\ShapeReconfigurationProblem{}} with a simply connected target shape in $\bigO(\numSupply \numTiles)$ time steps.
\end{theorem}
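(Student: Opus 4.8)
The plan is to treat the four reconfiguration phases as a single repeating cycle and to amortize the cost of that cycle over the $\bigO(\numSupply)$ supply tiles that must be relocated. First I would apply \zcref{lem:find_target_boundary} to bring the agent onto $\srBoundary(\srTargetTiles)$ with a valid outside pointer and to record that, at this point, at most $\bigO(\numSupply)$ supply tiles remain (the count changes by at most one during boundary location) while the target tile shape $\srTargetTiles$ is connected. From here I would read the phase-transition diagram of \zcref{fig:phase-transitions-no-holes} as a guarantee that the agent cycles through \phase{findSupplyNoHoles}, then \phase{compactSupplyNoHoles}, then alternations of \phase{traverseColumnNoHoles} and \phase{traverseBoundaryNoHoles}, before returning to \phase{findSupplyNoHoles}; the per-phase lemmas (\zcref{lem:find_supply}, \zcref{lem:compact_supply}, \zcref{lem:traversal}) certify that each individual phase terminates in $\bigO(\numTiles)$ steps and hands control to the next phase as drawn.

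Next I would establish three invariants that together yield correctness. (i) The agent starts and ends every cycle on $\srBoundary(\srTargetTiles)$: this holds initially and is re-established because \zcref{lem:traversal} returns the agent to a column start node lying on both $\srBoundary(\srTarget)$ and $\srBoundary(\srTargetTiles)$, so \phase{findSupplyNoHoles} can resume its boundary traversal. (ii) The target tile shape $\srTargetTiles$ stays connected and never shrinks, since the reconfiguration phases never lift a tile off a target node: \phase{compactSupplyNoHoles} only rearranges and lifts \emph{free} supply tiles (keeping the supply component connected by \zcref{lem:compact_supply}), while \phase{traverseColumnNoHoles} and \phase{traverseBoundaryNoHoles} deposit a tile only on a demand node reached along a chain of target tiles, hence a node adjacent to the connected target tile shape. (iii) The supply count is non-increasing and drops by exactly one per cycle, because a cycle turns one free supply tile into a carried tile and then places it on a previously empty demand node, leaving all other supply tiles untouched. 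Invariant (iii) gives termination: after $\bigO(\numSupply)$ cycles no demand node remains, so $\tiles = \srTarget$ and the instance is solved; invariant (ii) is exactly what makes all the boundary and column traversals of the intermediate lemmas applicable, and simple connectivity of $\srTarget$ (already used in \zcref{lem:find_supply}) guarantees that \phase{findSupplyNoHoles} always locates a supply component while $\numSupply > 0$.

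The runtime then follows by summation: each cycle costs $\bigO(\numTiles)$ by the three per-phase lemmas, there are $\bigO(\numSupply)$ cycles, and adding the initial $\bigO(\numSupply \numTiles)$ of \zcref{lem:find_target_boundary} keeps the total at $\bigO(\numSupply \numTiles)$. I expect the genuine obstacle to be the progress bookkeeping of invariant (iii) rather than the arithmetic: one must rule out wasted cycles and confirm that the tile deposited at the end of each cycle really lands on a fresh demand node (not merely shuffled among supply nodes), so that the $\bigO(\numSupply)$ bound on the number of cycles is honest and connectivity of $\srTargetTiles$ is never even momentarily broken.
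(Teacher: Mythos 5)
Your proposal is correct and follows essentially the same route as the paper's proof: one application of \zcref{lem:find_target_boundary} costing $\bigO(\numSupply \numTiles)$, followed by $\bigO(\numSupply)$ cycles through the four phases, each bounded by $\bigO(\numTiles)$ via \zcref{lem:find_supply,lem:compact_supply,lem:traversal}. The invariants you spell out explicitly (returning to $\srBoundary(\srTargetTiles)$, connectivity of the target tile shape, and one supply tile relocated per cycle) are exactly the facts the paper delegates to those per-phase lemmas, so there is no substantive difference in approach.
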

\begin{proof}
    Let~$\numSupply$ be the initial number of supply tiles.
    If~$\numSupply = 0$, the problem is already solved.
    Assume~$\numSupply > 0$.
    The agent finds the target tile boundary~$\srBoundary(\srTargetTiles)$ and initializes its outside pointer once with a runtime cost of~$\bigO(\numSupply \numTiles)$ (\zcref{lem:find_target_boundary}).
    It then spends $\bigO(\numTiles)$ time steps finding a removable supply tile (phases \phase{findSupplyNoHoles} and \phase{compactSupplyNoHoles}, \zcref{lem:find_supply,lem:compact_supply}), moving it to a suitable demand node (phases \phase{compactSupplyNoHoles}, \phase{traverseColumnNoHoles}, and \phase{traverseBoundaryNoHoles}, \zcref{lem:compact_supply,lem:traversal}), and returning to the target tile boundary (\zcref{lem:compact_supply,lem:traversal}).
    This process is repeated~$\numSupply$ times, i.e., once for each supply tile, bringing the total runtime bound to $\bigO(\numSupply \numTiles)$.
\end{proof}

Existing shape formation algorithms typically have runtimes of $\bigO(\numTiles^2)$ or $\bigO(\numTiles D)$ where~$D$ is the diameter of the initial shape~\cite{gmyr2020forming,hinnenthal2025icicle}.
Thus, due to the agent's ability to distinguish target from non-target nodes, our algorithm is faster than existing algorithms for $\numSupply = \smallO(\numTiles)$, i.e., if the initial and target shape largely overlap.
Furthermore, it is worst-case optimal:
Translating a line of~$\numTiles$ tiles by~$\numSupply$ positions requires~$\Omega(\numSupply \numTiles)$ steps, see \zcref{fig:lower-bound}.

\begin{figure}[tb]
    \centering%
    \includegraphics[width=0.85\textwidth]{lower_bound}%
    \caption{The agent~$\robot$ requires $\Omega(\numSupply \numTiles)$ steps to move all $\numSupply = 5$ supply tiles to demand nodes.}%
    \label{fig:lower-bound}%
\end{figure}

\begin{theorem}
    \label{thm:lower_bound}
    The agent requires $\Omega(\numSupply \numTiles)$ time steps to solve the \textsl{\ShapeReconfigurationProblem{}} in general.
\end{theorem}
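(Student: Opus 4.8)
The plan is to exhibit a single family of instances—the one sketched in \zcref{fig:lower-bound}—and to prove a matching lower bound on it by a transport/potential argument rather than by tracking the algorithm. Concretely, I would fix the direction \dirSE{}, set $v_i \coloneq (i-1,\, i-1)$ for $i \in \{1, \dots, \numTiles + \numSupply\}$ (each $v_i$ is a valid node, since $x+y = 2(i-1)$ is even, and $v_{i+1} = v_i + \vecSE$), and take the initial shape $\srInput = \{v_1, \dots, v_{\numTiles}\}$ together with the target shape $\srTarget = \{v_{\numSupply+1}, \dots, v_{\numSupply+\numTiles}\}$, i.e.\ the same straight line translated by $\numSupply$ positions along \dirSE{}. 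For any $1 \le \numSupply < \numTiles$ this is a legal instance: both shapes are straight lines of $\numTiles$ tiles (hence connected and simply connected), the overlap $\srInput \cap \srTarget = \{v_{\numSupply+1}, \dots, v_{\numTiles}\}$ is nonempty and connected as required by our assumptions, the supply nodes are $\{v_1, \dots, v_{\numSupply}\}$, and the demand nodes are $\{v_{\numTiles+1}, \dots, v_{\numTiles+\numSupply}\}$, so the number of supply tiles is exactly $\numSupply$.

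The heart of the argument is a $1$-Lipschitz potential. I would define $\phi(v) \coloneq (x+y)/2$ for a node $v = (x,y)$ and, for a tile set $S \subseteq \vertices$, set $\Phi(S) \coloneq \sum_{v \in S} \phi(v)$. Inspecting the six directional vectors shows that $x+y$ changes by exactly $-2$, $0$, or $+2$ under a single edge move, so $\phi$ changes by at most $1$ per lattice step. Along the line one has $\phi(v_i) = i-1$, and since translating the whole line by $\numSupply$ steps in direction \dirSE{} raises each tile's $\phi$-value by exactly $\numSupply$, we get $\Phi(\srTarget) = \Phi(\srInput) + \numSupply\numTiles$ without any further computation.

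Next I would connect $\Phi$ to the step count. In each look-compute-move cycle the agent performs exactly one of the actions (i)--(iii); only a move while carrying a tile relocates a tile, and then only by a single edge, so by the Lipschitz property $\Phi$ changes by at most $1$, while lifting, placing, and moving without a tile leave the multiset of tiled nodes—and hence $\Phi$—unchanged. Therefore, for any solution $\config[0], \dots, \config[\ell]$ we have $|\Phi(\tiles[\ell]) - \Phi(\tiles[0])| \le \ell$, and since $\tiles[0] = \srInput$ and $\tiles[\ell] = \srTarget$ this yields $\ell \ge \Phi(\srTarget) - \Phi(\srInput) = \numSupply\numTiles = \Omega(\numSupply\numTiles)$.

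The main obstacle is ruling out that the bound could be evaded by clever tile routing: because tiles are indistinguishable the agent may freely permute them, and because connectivity constrains the agent only while it carries a tile, it could in principle route a tile through off-line nodes instead of straight along the line. The potential argument is built precisely to be immune to this—$\phi$ is $1$-Lipschitz over the \emph{entire} lattice, not merely along the line—so no detour and no relabeling can transport the multiset of tiled nodes from $\srInput$ to $\srTarget$ in fewer than $\numSupply\numTiles$ tile-carrying steps. The only remaining care is to verify the Lipschitz constant from the six direction vectors and to confirm that the instance satisfies the problem's connectivity assumptions, both of which are routine.
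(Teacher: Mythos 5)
Your proposal is correct, and it reaches the $\Omega(\numSupply\numTiles)$ bound by a genuinely different route. The paper uses the same family of instances (a line of $\numTiles$ tiles shifted by $\numSupply$ positions) but argues via connectivity: on a line, the only safely liftable supply tile is the one at the far end, so for each of the $\numSupply$ supply tiles the agent is forced to walk $\Theta(\numTiles)$ steps to fetch it and $\Theta(\numTiles)$ steps back to the nearest demand node. Your argument instead sums the potential $\phi(v) = (x+y)/2$ over the tile positions (counting a carried tile at the agent's node), observes that $\Phi$ shifts by exactly $\numSupply\numTiles$ between $\srInput$ and $\srTarget$, and that each step changes $\Phi$ by at most $1$ since only a single tile moves along a single edge. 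What your approach buys is robustness and rigor: it is indifferent to the agent's strategy, to which tiles it chooses to lift or temporarily relocate, to the agent's starting position, and to the connectivity constraint entirely, whereas the paper's argument implicitly assumes the agent processes the supply tiles end-first and does not formally rule out exotic reorderings (it leans on connectivity to justify this). What the paper's argument buys is a slightly stronger intuition for \emph{why} the instance is hard in this model (the agent must physically commute along the line) and it accounts for the non-carrying walking steps as well; your bound counts only tile-carrying steps, which already suffices for $\Omega(\numSupply\numTiles)$. Both proofs are valid; yours is arguably the cleaner one to formalize.
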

\begin{proof}
    Consider a problem instance where all~$\numTiles$ tiles form a straight line.
    The~$\numSupply$ supply tiles are at one end of the line of tiles, the $\numTiles - \numSupply$ target tiles are on the other end, and the~$\numSupply$ demand nodes extend the line beyond the last target tile.
    The agent~$\robot$ is initially placed on the target end of the line.
    Such an instance is depicted in \zcref{fig:lower-bound} with $\numTiles = 8$ and $\numSupply = 5$.

    The only supply tile that can be lifted without violating connectivity is the tile at the other end of the line.
    The agent~$\robot$ needs to move $\numTiles - 1$ steps to get there.
    Once the tile is lifted, $\robot$ needs to move back~$\numTiles$ steps to the nearest demand node to place the tile.
    This process has to be repeated for each of the~$\numSupply$ supply tiles, resulting in $\Omega(\numSupply \numTiles)$ time steps.
\end{proof}

\section{Target Shapes with Holes}
\label{sec:target_with_holes}

If the target shape~$\srTarget$ has holes, our previous algorithm fails:
In such a shape, some columns' start nodes are on an inner boundary $B \neq \srBoundary(\srTarget)$, so they may not be found by an agent executing the previous traversal phases.
This failure is not surprising:
In a related setting, a finite automaton cannot visit all cells in a grid maze with holes when it is equipped with fewer than two pebbles that can be used to mark cells~\cite{budach1978automata,hoffmann1981pebblesearch}.
We conjecture that this impossibility extends to the \ShapeReconfigurationProblem{} for arbitrary target shapes with holes.

Thus, we first present a reconfiguration algorithm for an agent that can mark up to two tiles at a time with pebbles.
Then, we show how an agent without this ability can emulate placing pebbles on sufficiently scaled target shapes and present a reconfiguration algorithm based on this approach.

\subsection{Target Shape Exploration}
\label{sec:target_exploration}

A deterministic finite automaton equipped with a single counter of size~$\Theta(\log n)$ can explore arbitrary orthogonal mazes of size~$\numTiles$ in $\bigO(\numTiles^2)$ steps as shown by Blum and Kozen~\cite[Theorem~3.1]{blum1978powercompass}.
If the agent is instead equipped with two placeable markers whose distance in the maze emulates the counter, the execution time is $\bigO(\numTiles^3)$.
This \mazeAlg{} algorithm can be adapted for the hybrid model where an agent equipped with a counter---or two not necessarily distinguishable pebbles that can be placed on tiles---can visit every tile in a connected configuration~\cite{liedtke2021thesis}.
The counter is used to identify \emph{unique points}, which are the westernmost nodes of their respective boundaries' southernmost nodes.
To check whether a boundary node is a unique point, the agent traverses the boundary multiple times and keeps track of its current ``height,'' i.e., the agent's $y$-distance compared to the unique point candidate.
We omit further details of the unique point detection subroutine and only present the three main phases of the algorithm as we only make changes to these phases in our reconfiguration algorithm.

\begin{itemize}
    \item \newPhase{traverseColumnMaze}{Tra\-verse\-Column}:
        The agent~$\robot$ moves to the \dirN-most tile of its current \dirN-column.
        If this node is a unique point, $\robot$ enters phase \phase{traverseBoundaryMaze}.
        Otherwise, $\robot$ enters phase~\phase{returnSouthMaze}.
    \item \newPhase{returnSouthMaze}{Return\-South}:
        The agent~$\robot$ moves to the \dirS-most tile of its current \dirN-column and enters phase~\phase{traverseBoundaryMaze}.
    \item \newPhase{traverseBoundaryMaze}{Tra\-verse\-Bound\-ary}:
        The agent~$\robot$ traverses its current tile boundary~$\srBoundary[\robotnode](\tiles)$ by the~LHR.
        If~$\robot$ enters a unique point from direction~\dirNW{} during the traversal, it enters phase \phase{returnSouthMaze}.
        If~$\robot$ instead enters a node that does not have a tile neighbor to the south, it enters phase \phase{traverseColumnMaze}.
\end{itemize}

Similarly to the target shape exploration phases of the algorithm presented in \zcref{sec:target_no_holes}, the agent traverses boundaries by the LHR and columns whose start nodes lie on the boundaries~\cite{liedtke2021thesis}.
To ensure that all columns are traversed, the agent must traverse \emph{all} boundaries.
This is where the aforementioned unique points come into play:
The agent begins traversing a boundary when it reaches the boundary's unique point and it stops the traversal once this unique point is reached again.
See \zcref{fig:pebble_traversal}.
The exploration problem has thus been reduced to the problem of determining whether a boundary node~$v$~is~a~unique~point.

\begin{figure}[tb]
    \begin{subfigure}{0.3\textwidth}%
        \centering%
        \includegraphics[width=\textwidth]{pebble_traversal_a}%
        \caption{}%
        \label{fig:pebble_traversal_a}%
    \end{subfigure}%
    \hfill%
    \begin{subfigure}{0.3\textwidth}%
        \centering%
        \includegraphics[width=\textwidth]{pebble_traversal_b}%
        \caption{}%
        \label{fig:pebble_traversal_b}%
    \end{subfigure}%
    \hfill%
    \begin{subfigure}{0.3\textwidth}%
        \centering%
        \includegraphics[width=\textwidth]{pebble_traversal_c}%
        \caption{}%
        \label{fig:pebble_traversal_c}%
    \end{subfigure}%
    \caption{%
        The traversal path of an agent executing \mazeAlg{} (without the unique point detection subroutine) on a tile shape with a hole.
        The unique point of the inner boundary is highlighted in green.
        \subref{fig:pebble_traversal_a}:~The agent traverses the outer boundary and its columns until it enters the unique point of an inner boundary.
        \subref{fig:pebble_traversal_b}:~It traverses the inner boundary and its columns until it returns to the boundary's unique point.
        \subref{fig:pebble_traversal_c}:~It continues its traversal of the outer boundary's columns, eventually visiting all tiles.
    }%
    \label{fig:pebble_traversal}%
\end{figure}

A finite automaton executing \mazeAlg{} requires $\bigO(\numTiles^2)$ steps to visit the entire maze if it has a counter of size~$\Theta(\log n)$.
Alternatively, if the agent is equipped with two pebbles, they can be used to emulate the counter by adjusting the distance between them when traversing a boundary, resulting in $\bigO(\numTiles^3)$ time steps to visit all nodes~\cite{blum1978powercompass}.

\begin{theorem}
    An agent equipped with a counter or two pebbles can visit all tiled nodes in a connected configuration and terminate within $\bigO(\numTiles^2)$ or $\bigO(\numTiles^3)$ time steps, respectively.
\end{theorem}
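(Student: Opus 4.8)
The plan is to reduce the exploration claim to the correctness of the unique-point detection subroutine, which I would invoke as a black box from the cited maze-exploration results~\cite{blum1978powercompass,liedtke2021thesis}, and then account for the step count. First I would establish that, assuming unique points are detected correctly, the three phases visit every tiled node. The structural fact driving this is that every tile lies on exactly one \dirN-column, and that the \dirS-most (start) node of every \dirN-column has no tile neighbor to the \dirS{} and therefore lies on some tile boundary $\srBoundary[\robotnode](\tiles)$. Consequently it suffices to walk along every boundary and, from each encountered column start, to climb the associated column. Phase \phase{traverseBoundaryMaze} performs the boundary walk by the LHR; whenever it reaches a node without a tile neighbor to the \dirS{} it transfers control to \phase{traverseColumnMaze}, which ascends to the \dirN-most tile of the column and, via \phase{returnSouthMaze}, descends again to resume the walk. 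Thus one full sweep of a single boundary visits every tile of every column rooted on that boundary.

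Next I would treat the nesting of several boundaries, which is the reason the unique points are indispensable: an agent with bounded state cannot locally distinguish an inner boundary from the main boundary. Each boundary possesses exactly one unique point, namely the westernmost node among its \dirS-most nodes, and the algorithm uses these as canonical anchors. Phase \phase{traverseBoundaryMaze} descends into a nested inner-boundary traversal precisely when it enters an inner boundary's unique point from direction \dirNW, and it regards an inner traversal as complete when it returns to that same unique point. I would argue that this realizes a systematic traversal of the nested boundary structure in which each boundary is entered, traversed once, and left again, so that all column starts, and hence all tiles, are eventually visited; the agent halts upon returning to the unique point of the main boundary.

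For the runtime I would separate the pure movement cost from the detection cost. The combined length of all boundaries is $\bigO(\numTiles)$, and every column is ascended and descended once, so the movement of a single exploration costs $\bigO(\numTiles)$ steps. The dominant contribution is unique-point detection, which for a boundary of length $L$ re-traverses that boundary while the counter tracks the agent's height relative to the candidate; charging $\bigO(L)$ per boundary node and summing over boundaries via $\sum_i L_i^2 \le (\max_i L_i)\sum_i L_i = \bigO(\numTiles) \cdot \bigO(\numTiles) = \bigO(\numTiles^2)$ yields the bound for the counter-equipped agent. For two pebbles I would represent the counter value by the inter-pebble distance along a boundary, so that each counter operation costs an additional $\bigO(\numTiles)$ steps of pebble repositioning and re-traversal; this multiplies the previous bound by $\bigO(\numTiles)$, giving $\bigO(\numTiles^3)$.

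The step I expect to be the main obstacle is the correctness and cost of the unique-point detection itself: arguing that height-tracking over repeated boundary traversals reliably certifies a candidate as the westernmost of its boundary's \dirS-most nodes, and that the pebble-distance emulation of the counter preserves this certificate without corrupting the surrounding traversal or the memoryless recovery between nested boundaries. Since the excerpt explicitly defers these details to~\cite{blum1978powercompass,liedtke2021thesis}, I would treat the subroutine as established and concentrate the proof on the reduction above and on the amortized step count.
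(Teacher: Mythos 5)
Your proposal matches the paper's treatment: the paper states this theorem without a proof of its own, deferring correctness of the unique-point detection subroutine and the $\bigO(\numTiles^2)$/$\bigO(\numTiles^3)$ bounds entirely to Blum and Kozen~\cite{blum1978powercompass} and the adaptation in~\cite{liedtke2021thesis}, exactly as you do. Your added reconstruction of the boundary-and-column sweep and the amortized step count is consistent with the phase descriptions in the paper and introduces no new gap.
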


By making use of \mazeAlg{} to find supply components and demand nodes, and phase \phase{compactSupplyNoHoles} from \zcref{sec:target_no_holes:algorithm} to lift supply tiles, we get the following.

\begin{theorem}
    \label{thm:target_holes_counter}
    An agent equipped with a counter or two pebbles can solve an instance of the \textsl{\ShapeReconfigurationProblem{}} and terminate in~$\bigO(\numSupply \numTiles^2)$ or~$\bigO(\numSupply \numTiles^3)$ time steps, respectively.
\end{theorem}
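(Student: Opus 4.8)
The plan is to reuse the structure of the simply connected algorithm (\zcref{sec:target_no_holes:algorithm}) but to replace the two components that break in the presence of holes---locating a supply component and locating a demand node---by the \mazeAlg{} routine, while keeping phase \phase{compactSupplyNoHoles} unchanged to extract a safely removable tile. Concretely, I would have the agent repeat the following loop: (i)~run \mazeAlg{} to locate a supply component; (ii)~execute \phase{compactSupplyNoHoles} to compact it and lift a free supply tile; and (iii)~run \mazeAlg{} again to locate a demand node and place the carried tile. The point is that a full \mazeAlg{} pass visits \emph{every} tile of the connected (sub)shape it is driven on and then terminates, which is exactly the guarantee needed to reach the inner boundaries that the pure left-hand-rule traversal of \zcref{sec:target_no_holes} fails to reach.

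For both searches I would drive \mazeAlg{} on the target tile shape $\srTargetTiles$, which is connected and stays connected throughout the execution (\zcref{lem:compact_supply} preserves this, and each placement below attaches a tile to it), using the agent's ability to locally recognize target nodes so that ``tiled target node'' plays the role of a maze cell. The key observations are adjacency facts forced by connectivity: whenever a supply tile exists, connectivity of $\tiles$ forces an edge between $\srSupply$ and $\srTargetTiles$, so some target tile has a supply neighbor; and whenever a demand node exists, connectivity of $\srTarget$ together with the fact that $\srTargetTiles$ is then a proper connected subshape of $\srTarget$ forces a target tile adjacent to a node of $\srDemand$. Hence a complete \mazeAlg{} pass over $\srTargetTiles$ necessarily encounters the required adjacency; the agent then steps from the incident target tile onto the supply node (initializing the outside pointer and entering \phase{compactSupplyNoHoles}) or onto the demand node (placing its tile). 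Phase \phase{compactSupplyNoHoles} and \zcref{lem:compact_supply} carry over verbatim, since their analysis is local to the supply component's boundary and already accounts for adjacent target nodes, so holes in $\srTarget$ do not affect it; moreover \phase{compactSupplyNoHoles} leaves the agent on a tiled target node carrying the lifted tile, which is a valid starting point for the subsequent demand search.

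For the runtime, each \mazeAlg{} pass costs $\bigO(\numTiles^2)$ with a counter and $\bigO(\numTiles^3)$ with two pebbles (the preceding theorem), and each \phase{compactSupplyNoHoles} costs $\bigO(\numTiles)$ (\zcref{lem:compact_supply}). Since one loop iteration moves exactly one supply tile onto one demand node, the loop runs $\numSupply$ times, yielding $\bigO(\numSupply \numTiles^2)$ and $\bigO(\numSupply \numTiles^3)$, respectively. Termination is detected by one final \mazeAlg{} pass that completes without ever finding a supply neighbor, which happens precisely when $\srSupply = \emptyset$.

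The step I expect to be the main obstacle is the connectivity bookkeeping while the agent carries a tile during the demand search: I must ensure it never steps onto a node whose removal would disconnect $\tiles \cup \{\robotnode\}$. Restricting \mazeAlg{} to $\srTargetTiles$ rather than to the whole shape $\srTarget$ is exactly what avoids this, because the agent then only ever stands on tiled target nodes or takes a single hop onto an adjacent demand node before placing, so $\tiles \cup \{\robotnode\}$ remains connected at every step and each placement keeps $\srTargetTiles$ connected. The remaining care is to confirm that \mazeAlg{}, whose statement is phrased for the full tile shape $\tiles$, behaves correctly when driven on the connected subshape $\srTargetTiles$; this is immediate, as the agent can locally decide membership in $\srTargetTiles$, so the subshape is simply another connected configuration from the viewpoint of the exploration routine.
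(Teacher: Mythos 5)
Your proposal follows essentially the same route as the paper, which justifies this theorem only by the one-line remark that \mazeAlg{} is used to find supply components and demand nodes while phase \phase{compactSupplyNoHoles} lifts the supply tiles; your loop of one exploration pass per search, repeated $\numSupply$ times, is exactly that argument with the runtime accounting made explicit. The additional adjacency and connectivity details you supply (driving the exploration on $\srTargetTiles$, and the forced edges between $\srSupply$ and $\srTargetTiles$ and between $\srTargetTiles$ and $\srDemand$) go beyond what the paper records but are consistent with it.
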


To the best of our knowledge, no published maze exploration algorithm for finite automata achieves execution times below $\bigO(\numTiles^2)$ (with a counter) or $\bigO(\numTiles^3)$ (with pebbles) on mazes with holes.
Therefore, we do not expect faster shape reconfiguration algorithms in the hybrid model without new traversal strategies.

In the remainder of this paper, we show how an agent with no additional capabilities can solve instances of the \ShapeReconfigurationProblem{} by emulating pebbles.
As a first step, we present how an agent can emulate two-pebble exploration algorithms on problem instances with $\srInput = \srTarget$.
To maintain connectivity, our pebble emulation strategy requires all tiles to be safely removable.
We require that the target shape contains no \emph{bottlenecks}, which are defined as follows:
A node~$v \in \srTarget$ is a \emph{bottleneck} if $\nbrhood(v) \setminus \srTarget$ is not connected.
Upscaling arbitrary shapes by a small constant always results in bottleneck-free shapes.
Such scaling assumptions are common in the literature, e.g.,~\cite{fekete2023connected,luchsinger2019self}.

Let $\mathcal{A}$ be a two-pebble exploration algorithm for an agent in the hybrid model where all pebbles are placed on boundary tiles.
Remember that the agent has the ability to distinguish nodes $v \in \srTargetTiles$ from nodes $w \in \srDemand$ when exploring~$\srTarget$.
The core idea behind our approach is shown in \zcref{fig:emulated_pebbles}:
We emulate pebbles by lifting tiles from target nodes and turning them into demand nodes.
Each demand node then corresponds to a pebble.
To explore all tiled nodes, the agent~$\robot$ simply needs to execute~$\mathcal{A}$, with three adjustments:
If there is a pebble query in~$\mathcal{A}$, $\robot$ checks whether~$\robotnode \in \srDemand$;
if a pebble is placed in~$\mathcal{A}$, $\robot$ lifts a tile;
and if a pebble is picked up in~$\mathcal{A}$, $\robot$ places a tile.

\begin{figure}[tb]
    \begin{subfigure}{0.22\textwidth}%
        \centering%
        \includegraphics[width=\textwidth]{emulated_pebbles_a}%
        \caption{}%
        \label{fig:emulated_pebbles_a}%
    \end{subfigure}%
    \hfill%
    \begin{subfigure}{0.22\textwidth}%
        \centering%
        \includegraphics[width=\textwidth]{emulated_pebbles_b}%
        \caption{}%
        \label{fig:emulated_pebbles_b}%
    \end{subfigure}%
    \hfill%
    \begin{subfigure}{0.22\textwidth}%
        \centering%
        \includegraphics[width=\textwidth]{emulated_pebbles_c}%
        \caption{}%
        \label{fig:emulated_pebbles_c}%
    \end{subfigure}%
    \hfill%
    \begin{subfigure}{0.22\textwidth}%
        \centering%
        \includegraphics[width=\textwidth]{emulated_pebbles_d}%
        \caption{}%
        \label{fig:emulated_pebbles_d}%
    \end{subfigure}%
    \caption{%
        \subref{fig:emulated_pebbles_a}:~A tile shape with two pebbles visualized as gray inner hexagons.
        \subref{fig:emulated_pebbles_b}:~The pebbles are emulated by moving tiles at the pebble positions to the outside of the target shape.
        \subref{fig:emulated_pebbles_c}--\subref{fig:emulated_pebbles_d}:~If two pebbles are adjacent, the position of the second pebble is encoded relative to the first pebble's position as visualized by the black arrow.
    }%
    \label{fig:emulated_pebbles}%
\end{figure}

Since an agent executing~$\mathcal{A}$ only places pebbles on boundary tiles with non-target node neighbors, we can simply move the tiles at the pebble locations to those nodes and move them back when the pebble is lifted.
This approach fails if two pebbles share a single non-target neighbor as only one tile can be moved there.
But as such a scenario only arises if the pebbles are within a constant distance of each other, which can be traversed by the agent within a constant number of moves, the offset of the second pebble relative to the first pebble can be encoded within the agent's constantly many states.
The same can be done if emulating two adjacent pebbles would violate connectivity.

\begin{lemma}
    \label{lem:exploration_no_pebbles}
    Let $\tiles = \srTarget$ and let $\nbrhood(v) \setminus \srTarget$ be connected for each $v \in \srTarget$ (no bottlenecks). Then an agent can visit every target node without the use of pebbles in~$\bigO(\numTiles^3)$ time steps.
\end{lemma}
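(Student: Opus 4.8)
The plan is to instantiate the two‑pebble hybrid‑model exploration algorithm $\mathcal{A}$ guaranteed by the preceding theorem (the $\bigO(\numTiles^3)$ variant) and to emulate its two pebbles using tiles, exactly as sketched before the lemma. Since the agent can sense target membership, I would run $\mathcal{A}$ with the role of ``tile'' played by ``target node'': every boundary and column traversal of $\mathcal{A}$ is carried out on the fixed set $\srTarget$, which never changes, so $\mathcal{A}$'s navigation stays well defined throughout. Because $\tiles[0] = \srTarget$, a target node carries a tile iff it is not a demand node, and at most two target nodes are ever unoccupied at once; hence the test $\robotnode \in \srDemand$ unambiguously reports ``a pebble is here.'' I then emulate the three pebble primitives by: answering a pebble query at $\robotnode$ with that test; emulating a placement at a boundary node $v$ by lifting the tile at $v$, carrying it one step to a suitable non‑target neighbor $w$, and placing it (so $v$ becomes the demand node marking the pebble); and emulating a pickup by carrying that tile back from $w$ to $v$. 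As $\mathcal{A}$ only ever marks boundary tiles, such a neighbor $w$ always exists.

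The heart of the argument is a local connectivity lemma: if $X$ is a connected tiled set and the tiled neighbors of some $v \in X$ form a contiguous arc of $\nbrhood(v)$, then $X \setminus \{v\}$ is connected and, for $w \notin X$ chosen as an endpoint of the complementary (non‑tiled) arc, $(X \setminus \{v\}) \cup \{w\}$ is connected as well. This holds because consecutive ($60^\circ$‑apart) neighbors of $v$ are themselves adjacent, so the tiled‑neighbor arc is a path that reroutes any path passing through $v$, while the chosen endpoint $w$ is adjacent to a tiled neighbor of $v$ other than $v$. The no‑bottleneck hypothesis says precisely that $\nbrhood(v) \setminus \srTarget$ — and therefore also $\nbrhood(v) \cap \srTarget$ — is a contiguous arc for every $v \in \srTarget$, so the lemma applies to every boundary node of $\srTarget$ and, by locality, to every boundary node whose neighborhood is untouched by an already‑parked tile. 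I would use it first to show that carrying one tile from $v$ to the endpoint $w$ (and back) keeps the configuration connected at every intermediate step, the ``carrying at $v$'' state being connected because $(\srTarget \setminus \{v\}) \cup \{v\} = \srTarget$.

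For two simultaneously placed pebbles I would split on the distance between their marked nodes $v_1, v_2$. If $\mathrm{dist}(v_1, v_2) \geq 3$, then no neighbor of $v_2$ equals the removed node $v_1$ or the parked tile $w_1$, so the tiled neighborhood of $v_2$ in the current configuration $C_1 = (\srTarget \setminus \{v_1\}) \cup \{w_1\}$ coincides with its neighborhood in $\srTarget$; the connectivity lemma then applies to $v_2$ in $C_1$, relocating $v_2 \to w_2$ keeps everything connected, and $w_1 \neq w_2$ since they lie near different nodes. If $\mathrm{dist}(v_1, v_2) \leq 2$ — the only case in which the two tiles could contend for the same parking node or in which parking both could pinch off connectivity — I would physically park only the first tile and store the offset of the second pebble relative to the first in the agent's finite state; since this offset ranges over $\bigO(1)$ vectors and the agent stays within $\bigO(1)$ hops of both marks, it can still answer every pebble query by a constant‑length local walk, as claimed in the preceding discussion. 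The runtime follows since each emulated primitive costs $\bigO(1)$ extra moves and $\mathcal{A}$ performs $\bigO(\numTiles^3)$ primitives, so the emulation terminates in $\bigO(\numTiles^3)$ steps and visits every node of $\srTarget$ because $\mathcal{A}$ visits every ``tile.'' I expect the two‑pebble connectivity bookkeeping — verifying that the close‑distance cases are genuinely confined to a constant region and that the finite‑state offset encoding faithfully answers all queries — to be the main obstacle, whereas the single‑tile relocation lemma is clean.
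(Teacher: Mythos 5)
Your proposal follows essentially the same route as the paper, which proves this lemma only via the prose preceding it: emulate each pebble by lifting the tile at the marked boundary node and parking it on a non-target neighbor (so the mark becomes a demand node), rely on the bottleneck-free condition to keep the configuration connected, and fall back to encoding the second pebble's offset in the agent's finite state when the two marks are within constant distance. Your explicit contiguous-arc connectivity lemma and the distance case split are just a more detailed write-up of the same argument, and they are correct.
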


Note that the strategy still works if~$\robot$ is initially carrying a tile.
In this case,~$\robot$ first places its tile on an adjacent untiled node before lifting the tile at the designated pebble location.

\subsection{Building a Tile Depot}
\label{sec:tile_depot}

The pebble emulation construction presented above is useful for target shapes that are finished, but it does not help us solve the \ShapeReconfigurationProblem{} just yet.
An initial shape~$\srInput$ has~$\numSupply$ demand nodes which may all be indistinguishable from emulated pebbles.
Thus, we form an intermediate shape where the agent can distinguish demand nodes created by emulating pebbles from ``regular'' demand nodes which do not act as markers and need to be tiled.

\begin{figure}[tb]
    \begin{subfigure}{0.27084639498432601880877742946708\textwidth}%
        \centering%
        \includegraphics[width=\textwidth]{depot_formation_a}%
        \caption{}%
        \label{fig:depot_formation_a}%
    \end{subfigure}%
    \hfill%
    \begin{subfigure}{0.27084639498432601880877742946708\textwidth}%
        \centering%
        \includegraphics[width=\textwidth]{depot_formation_b}%
        \caption{}%
        \label{fig:depot_formation_b}%
    \end{subfigure}%
    \hfill%
    \begin{subfigure}{0.18620689655172413793103448275862\textwidth}%
        \centering%
        \includegraphics[width=\textwidth]{depot_formation_c}%
        \caption{}%
        \label{fig:depot_formation_c}%
    \end{subfigure}%
    \hfill%
    \begin{subfigure}{0.17210031347962382445141065830721\textwidth}%
        \centering%
        \includegraphics[width=\textwidth]{depot_formation_d}%
        \caption{}%
        \label{fig:depot_formation_d}%
    \end{subfigure}%
    \caption{%
        Tile depot formation.
        \subref{fig:depot_formation_a}:~In phase \phase{buildLineDepot}, a line is formed by moving \dirN-columns eastward until only a single column remains.
        The black dots show nodes where tiles are lifted and the gray arrows show where those tiles are moved.
        A tile is moved one column at a time (first~\dirSE, then~\dirS{} to an untiled node), so it may need to be lifted multiple times.
        The highlighted nodes in the easternmost column show the untiled nodes that the tiles are moved to.
        \subref{fig:depot_formation_b}:~The resulting line from the previous phases is disconnected from the target shape~$\srTarget$, so it is moved back in phase \phase{moveLineNorthWestDepot} until~$\srTargetTiles \neq \varnothing$.
        \subref{fig:depot_formation_c}:~In phase \phase{moveLineSouthDepot}, the \dirN-most tiles of the line are moved south until the \dirN-most node of the line is south of the target shape~$\srTarget$, see~\subref{fig:depot_formation_d}.
    }%
    \label{fig:depot_formation}
\end{figure}

We begin by removing all tiles from the target shape~$\srTarget$ to create a \emph{depot} of tiles which the agent can repeatedly visit to gradually re\nobreakdash-tile the target shape.
One way to achieve this is by creating a line of tiles in $\bigO(\numTiles^2)$ steps (\lineAlg{})~\cite[Theorem~3]{gmyr2020forming} and then moving that line in some fixed direction until $\tiles \cap \srTarget = \varnothing$, which can easily be checked on a line by simply traversing it. The tile depot formation algorithm works in three phases, see~\zcref{fig:depot_formation}.

\begin{itemize}
    \item \newPhase{buildLineDepot}{Build\-Line}:
        The agent~$\robot$ executes \lineAlg{} to turn the configuration into an \dirN-column.
        In short, the agent moves to a locally westernmost northernmost tile, moves one step~\dirSE{} and then moves~\dirS{} until it can place the tile.
        It repeats this process until a single \dirN-column is formed.
        Then, $\robot$ traverses the column one more time to check whether it still overlaps the target shape.
        If so, it skips to the third phase \phase{moveLineSouthDepot}.
        Otherwise, it enters phase \phase{moveLineNorthWestDepot}.
    \item \newPhase{moveLineNorthWestDepot}{Move\-Line\-North\-West}:
        The agent~$\robot$ moves to the line's~\dirS-most tile.
        Then, $\robot$ repeatedly lifts the tile at~$\robotnode$, moves~\dirNW, places its carried tile, and moves~\dirNE{}, until $\robotnode \notin \tiles$.
        Finally,~$\robot$ moves back in direction~\dirSW{} to the \dirN-most tile of the line.
        If any tile was placed on a target node during the loop, $\robot$ enters phase \phase{moveLineSouthDepot}.
        Otherwise, $\robot$ repeats~this~phase.
    \item \newPhase{moveLineSouthDepot}{Move\-Line\-South}:
        The agent~$\robot$ moves to the line's~\dirN-most tile, lifts it, and moves~\dirS{} until~$\robotnode \notin \tiles$.
        Here, $\robot$ places its carried tile.
        If~$\robot$ entered a target node at any point while moving in direction~\dirS, it repeats this phase.
        Otherwise, $\robot$ terminates.
\end{itemize}

\begin{lemma}
    \label{lem:tile_depot}
    Given an instance of the \textsl{\ShapeReconfigurationProblem{}}, the agent can transform the tile shape into an \dirN-column $(v_1, \dots, v_\numTiles)$ with ${v_\numTiles + \vecN \in \srTarget}$ and $v_i \notin \srTarget$ for~$1 \leq i \leq \numTiles$ and terminate within $\bigO(\numTiles^2)$ time steps.
\end{lemma}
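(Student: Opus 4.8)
The plan is to verify the three phases separately, show each costs $\bigO(\numTiles^2)$ time, and check that they chain into the claimed depot column.

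For phase \phase{buildLineDepot} I would simply invoke the correctness and $\bigO(\numTiles^2)$ bound of \lineAlg{}~\cite[Theorem~3]{gmyr2020forming}: it turns $\tiles$ into a single \dirN-column, and the one extra traversal of the column costs $\bigO(\numTiles)$ and decides whether the column already meets $\srTarget$ (enter phase \phase{moveLineSouthDepot}) or not (enter phase \phase{moveLineNorthWestDepot}). Throughout both remaining phases the configuration stays connected, which I would justify by the carried-tile rule: whenever the agent lifts a tile it keeps $\tiles \cup \{\robotnode\}$ connected by bridging the momentarily split column with its own position.

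The cleanest phase is \phase{moveLineSouthDepot}, where the termination test exactly produces the three required properties. One iteration lifts the \dirN-most tile $w_\numTiles$ of the current column $(w_1,\dots,w_\numTiles)$ and carries it \dirS through $w_{\numTiles-1},\dots,w_1$ to the empty node $w_1+\vecS$, so the column shifts one step \dirS and becomes $(w_1+\vecS,w_1,\dots,w_{\numTiles-1})$. The set of nodes entered while moving \dirS is exactly this new column, hence the phase stops precisely when the new column contains no target node, which is the condition $v_i\notin\srTarget$ for all $i$. It remains to see that the new \dirN-most tile $v_\numTiles=w_{\numTiles-1}$ satisfies $v_\numTiles+\vecN=w_\numTiles\in\srTarget$. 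By induction the column at the start of every non-terminating iteration contains a target node --- initially by the entry condition $\srTargetTiles\neq\varnothing$, and afterwards because a repeated iteration is triggered exactly when the newly formed column meets $\srTarget$. Since the terminating test certifies $w_1,\dots,w_{\numTiles-1}$ to be non-target, the target node of $(w_1,\dots,w_\numTiles)$ can only be $w_\numTiles=w_{\numTiles-1}+\vecN$, giving the missing property. Finally, the column moves one step \dirS per iteration and $\srTarget$ has \dirN-extent $\bigO(\numTiles)$ (it is connected with $\numTiles$ nodes), so there are $\bigO(\numTiles)$ iterations of $\bigO(\numTiles)$ steps each, i.e.\ $\bigO(\numTiles^2)$.

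The delicate step, and the one I expect to be the main obstacle, is proving that phase \phase{moveLineNorthWestDepot} really re-establishes $\srTargetTiles\neq\varnothing$: a \dirN-column swept to the \dirNW only covers a fixed diagonal strip, so it could in principle miss $\srTarget$ entirely. I would control this by tracking the two coordinates $x-y$ and $x+y$: every tile move of \lineAlg{} is \dirSE or \dirS and therefore never increases $x-y$ and never decreases $x+y$, whereas a \dirNW move preserves $x-y$ and decreases $x+y$. Using $\srInput\cap\srTarget\neq\varnothing$ I would argue that the column produced by phase \phase{buildLineDepot} occupies a contiguous range of $x-y$ values that still contains the $x-y$ value of some target node situated to its \dirNW, so that after finitely many passes one of its tiles coincides with that node; pinning down exactly where \lineAlg{} deposits the column relative to $\srTarget$ is the part that needs the most care. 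Each pass shifts the whole column one unit \dirNW with $\bigO(\numTiles)$ lift-and-place operations, and the number of passes is bounded by the \dirNW-displacement between column and target, which is $\bigO(\numTiles)$; thus this phase is also $\bigO(\numTiles^2)$. Summing the three phases yields the $\bigO(\numTiles^2)$ bound, and by the analysis of phase \phase{moveLineSouthDepot} the final configuration is a \dirN-column $(v_1,\dots,v_\numTiles)$ with $v_\numTiles+\vecN\in\srTarget$ and $v_i\notin\srTarget$ for all $i$, as claimed.
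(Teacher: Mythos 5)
Your decomposition into the three phases matches the paper's, your \phase{moveLineSouthDepot} analysis is correct (and in fact more explicit than the paper about why the terminating test yields $v_\numTiles + \vecN \in \srTarget$ and $v_i \notin \srTarget$), and the $\bigO(\numTiles^2)$ accounting is right. The problem is exactly where you suspect it is: the claim that phase \phase{moveLineNorthWestDepot} re-establishes $\srTargetTiles \neq \varnothing$ is asserted but not proved. Your coordinate bookkeeping ($x-y$ invariant under \dirSE{} and \dirNW{} moves, monotone under \dirS{} moves) is the right frame, but it only tells you how individual tiles drift; it does not by itself show that the finished column contains a node whose $x-y$ value equals that of some initially tiled target node \emph{and} that lies weakly to the \dirSE{} of it. A tile that starts on a target node may well leave its diagonal (every \dirS{} move strictly decreases its $x-y$), so tracking tiles is not enough.

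The paper closes this gap by tracking \emph{occupancy of the diagonal} rather than a tile: fix $w_0 \in \srInput \cap \srTarget$ and consider the nodes $w_{i+1} = w_i + \vecSE$. In \lineAlg{}, whenever the tile currently at $w_i$ is lifted, it is first moved one step \dirSE{} to $w_{i+1}$ and then either placed there or pushed further \dirS{} because $w_{i+1}$ is already occupied --- in either case $w_{i+1}$ is tiled once the operation completes. Inductively, some $w_k$ on this diagonal is always tiled, and since \lineAlg{} moves each tile a distance of $\bigO(\numTiles)$, the finished column contains such a $w_k$ with $k = \bigO(\numTiles)$. Because each \dirNW{} pass shifts the whole column by exactly $-\vecSE$, after $k$ passes a tile lands on $w_0 \in \srTarget$, which bounds the number of passes by $\bigO(\numTiles)$ and completes the phase in $\bigO(\numTiles^2)$ steps. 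Without this (or an equivalent) invariant, your proof of the middle phase does not go through; with it, the rest of your argument is sound.
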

\begin{proof}
    In phase \phase{buildLineDepot}, the agent~$\robot$ forms an \dirN-column of tiles within ${t = \bigO(\numTiles^2)}$ time steps~\cite{gmyr2020forming}.
    Instead of terminating, $\robot$ then switches to either phase \phase{moveLineNorthWestDepot} or phase \phase{moveLineSouthDepot} depending on whether the line still intersects the target shape, which is checked with a single line traversal.

    Assume $\srTargetTiles[t] = \varnothing$, i.e., $\robot$ enters phase \phase{moveLineNorthWestDepot}.
    Let $w_0 \in \srInput \cap \srTarget \neq \varnothing$ be a target node that was tiled in the initial configuration.
    We show that at time~$t$, there exists a path $(w_0, \dots, w_k)$ for some $k = \bigO(\numTiles)$ with $w_{i+1} = w_i + \vecSE$ for $0 \leq i < k$ and where $w_k \in \tiles[t]$ is part of the finished line.
    In other words, a tile on node~$w_k$ lies~$k$ steps in \dirSE-direction from~$w_0$.
    By moving the whole line of tiles in direction~\dirNW, the tile from node~$w_k$ will eventually be placed on node~$w_0$, by which we get $w_0 \in \srTargetTiles[t + t'] \neq \varnothing$ for some~$t'$.
    During \lineAlg{}, a tile from~$w_0$ is moved in direction~\dirSE{} to node $w_1 = w_0 + \vecSE$ and then either placed there or moved further south if another tile already occupies~$w_1$.
    Either way, $w_1 \in \tiles$ after the tile from~$w_0$ is placed.
    Analogously, if~$\robot$ lifts and moves the tile from~$w_1$, then $w_2 = w_1 + \vecSE \in \tiles$ after the tile is placed.
    This argument extends for the path $w_0, \dots, w_k$ where~$w_k \in \tiles[t]$ is part of the final line.
    In the correctness proof of \lineAlg{}, the authors show that every tile is moved by a distance of at most~$2\numTiles$ nodes, so~$k = \bigO(\numTiles)$.
    Moving the entire line in direction~\dirNW{} by a single step takes $\bigO(\numTiles)$ time steps.
    Thus, after $t' = \bigO(\numTiles^2)$ time steps, $\srTargetTiles[t + t'] \neq \varnothing$ and~$\robot$ switches to phase \phase{moveLineSouthDepot}.

    The line now partially consists of target nodes.
    In phase \phase{moveLineSouthDepot}, $\robot$ moves to the \dirN-most tile, lifts that tile, moves~\dirS{} until it is south of the \dirS-most node of the line, and places the carried tile.
    If~$\robot$ encounters any target nodes while moving~\dirS, it repeats the same process.
    Otherwise, it terminates.
    In the worst case, the line's start node~$v_1$ is a target node and the target shape~$\srTarget$ is an \dirN-column with end node~$v_1$, i.e., $v_1 \in \srTarget$ and the entire rest of the target shape $\srTarget \setminus \{v_1\}$ is right to the south of~$v_1$.
    In this case, the line of tiles needs to be moved south by $2n - 1$ nodes.
    For every line movement step, $\robot$ traverses the entire line twice to lift the \dirN-most tile and then place it south of the \dirS-most tile, which takes $\bigO(\numTiles)$ steps.
    Thus, phase \phase{moveLineSouthDepot} also finishes within $\bigO(\numTiles^2)$ time steps.
\end{proof}

\subsection{Reconfiguring Scaled Shapes with Holes}

The high-level idea of the full algorithm is as follows:
The agent first forms a tile depot with the algorithm described in \zcref{sec:tile_depot}, see \zcref{fig:full_algorithm_b}.
Then, it repeatedly lifts the \dirS-most tile (which is safely removable) from the depot, traverses the target shape using the pebble-free exploration algorithm from \zcref{sec:target_exploration}, places the tile at the first demand node it encounters, and returns to the depot.
This process is repeated until all tiles from the depot have been moved to demand nodes.

\begin{figure}[tb]
    \begin{subfigure}{0.31\textwidth}%
        \centering%
        \includegraphics[width=\textwidth]{full_algorithm_a}%
        \caption{}%
        \label{fig:full_algorithm_a}%
    \end{subfigure}%
    \hfill%
    \begin{subfigure}{0.31\textwidth}%
        \centering%
        \includegraphics[width=\textwidth]{full_algorithm_b}%
        \caption{}%
        \label{fig:full_algorithm_b}%
    \end{subfigure}%
    \hfill%
    \begin{subfigure}{0.31\textwidth}%
        \centering%
        \includegraphics[width=\textwidth]{full_algorithm_c}%
        \caption{}%
        \label{fig:full_algorithm_c}%
    \end{subfigure}%
    \caption{%
        \subref{fig:full_algorithm_a}:~An instance with a bottleneck-free target shape.
        \subref{fig:full_algorithm_b}:~The tile depot is built.
        \subref{fig:full_algorithm_c}:~The main target boundary~$\srBoundary(\srTarget)$ is filled, enabling pebble emulation.
    }%
    \label{fig:full_algorithm}%
\end{figure}

It remains to show that~$\robot$ can make use of the pebble-free exploration algorithm even when $\tiles \neq \srTarget$.
Note that $\robot$ only places pebbles on boundary nodes and it only does so during the unique point detection subroutine.
Thus, before the unique point detection subroutine is initiated at some target boundary~$B \subseteq \srTarget$, we need to ensure that~$\robot$ has previously placed tiles at every node $v \in B$ and at every adjacent target node $w \in \nbrhood(B) \cap \srTarget$.
This also ensures that $\nbrhood(v) \cap \tiles$ is connected for each boundary node $v \in B$ and consequently that the requirements for the pebble emulation procedures are fulfilled \emph{locally} along the entire boundary~$B$.
To achieve this, $\robot$ tries to fully traverse a boundary~$B$ by the LHR before initiating the unique point detection subroutine.
If it encounters any demand node $w \in B \cup \nbrhood(B)$ during the traversal, it places its carried tile at~$w$, cancels the boundary traversal, and returns to the tile depot by the RHR.
Otherwise, i.e., if no demand nodes were encountered, the requirements for emulating pebbles along~$B$ are fulfilled and~$\robot$ continues with the exploration algorithm.

For ease of presentation, we use two different phases for filling boundaries with tiles.
Immediately after building the tile depot, $\robot$ traverses and fills the main target boundary~$\srBoundary(\srTarget)$ in phase \phase{fillMainBoundaryHoles}.
It makes use of an internal flag \srFlag{boundaryFlag} which is set as soon as~$\robot$ has fully traversed~$\srBoundary(\srTarget)$ and returned to the tile depot without finding a demand node; see \zcref{fig:full_algorithm_c}.
Other boundaries encountered by~$\robot$ are being tiled in phase \phase{fillBoundaryHoles}.
Here,~$\robot$ emulates placing a pebble at the traversal start and it emulates lifting the pebble after the boundary has been traversed or a demand node was encountered.

After the agent places a tile (in the boundary filling or the target traversal phases), it needs to return to the depot.
It does so by backtracking its previous path using the RHR along previously visited (and thus filled) boundaries.

\begin{figure}[tb]
    \centering%
    \includegraphics{phase_transitions_holes}%
    \caption{Phase transitions of the algorithm for target shapes with no bottlenecks.}%
    \label{fig:phase-transitions-holes}%
\end{figure}

We assume that the agent has previously built a tile depot south of the target shape, see \zcref{sec:tile_depot}.
The full algorithm uses seven phases (in addition to the initial depot formation phase).
It starts in phase \phase{liftDepotTileHoles} on some depot tile.
The phase transitions of the algorithm are illustrated in \zcref{fig:phase-transitions-holes}.

\begin{itemize}
    \item \newPhase{liftDepotTileHoles}{Lift\-Depot\-Tile}:
        The agent~$\robot$ moves in direction~\dirS{} until it reaches an \dirS-most tile, which it lifts.
        Then, $\robot$ moves in direction~\dirN{} until it enters a target node.
        If \srFlag{boundaryFlag} is set, $\robot$ switches to phase \phase{traverseColumnHoles}.
        If not, $\robot$ switches to phase~\phase{fillMainBoundaryHoles}.
    \item \newPhase{fillMainBoundaryHoles}{Fill\-Main\-Bound\-ary}:
        The agent~$\robot$ traverses~$\srBoundary(\srTarget)$ by the LHR.
        If~$\robot$ enters a demand node or is adjacent to one while still carrying a tile, it moves there (if necessary), places its carried tile, moves back, traverses~$\srBoundary(\srTarget)$ by the RHR until it enters a node with a supply neighbor in direction~\dirS, and switches to phase \phase{liftDepotTileHoles}.
        If~$\robot$ instead enters a node with a supply neighbor in direction~\dirS{} again without having placed its carried tile, it sets \srFlag{boundaryFlag} and switches to phase \phase{traverseColumnHoles}.
    \item \newPhase{traverseColumnHoles}{Tra\-verse\-Column}:
        The agent~$\robot$ moves in direction~\dirN{} until it is either adjacent to a demand node~$v \in \srDemand$ or its neighbor in direction~\dirN{} is not a target node.
        In the first case, $\robot$ moves to~$v$, places its carried tile, moves back, and switches to phase \phase{returnSouthHoles}.
        Otherwise, $\robot$ switches to phase \phase{fillBoundaryHoles}.
    \item \newPhase{returnSouthHoles}{Return\-South}:
        The agent~$\robot$ moves in direction~\dirS{} until it has no target node neighbor in direction~\dirS.
        If~$\robot$ carries a tile, it switches to phase \phase{traverseBoundaryHoles}.
        Otherwise, $\robot$ switches to phase \phase{returnToDepotHoles}.
    \item \newPhase{traverseBoundaryHoles}{Tra\-verse\-Bound\-ary}:
        The agent~$\robot$ traverses $\srBoundary[\robotnode](\srTarget)$ by the LHR until it either enters the boundary's unique point, in which case~$\robot$ switches to phase \phase{returnSouthHoles}, or the node in direction~\dirS{} is not a target node, in which case~$\robot$ switches to phase \phase{traverseColumnHoles}.
    \item \newPhase{fillBoundaryHoles}{Fill\-Bound\-ary}:
        The agent~$\robot$ emulates placing a pebble and traverses $\srBoundary[\robotnode](\srTarget)$.
        If~$\robot$ is adjacent to a demand node $v \in \srDemand$ with no emulated pebble, it moves to~$v$, places its carried tile, moves back, and traverses $\srBoundary[\robotnode](\srTarget)$ by the RHR until it enters a node with an emulated pebble.
        It emulates lifting the pebble and switches to phase \phase{returnSouthHoles}.
        If~$\robot$ instead enters a node which holds an emulated pebble without having placed its carried tile, it emulates lifting the pebble and switches to phase \phase{traverseBoundaryHoles} if~$\robotnode$ is the boundary's unique point or phase \phase{returnSouthHoles} otherwise.
    \item \newPhase{returnToDepotHoles}{Return\-To\-Depot}:
        The agent traverses $\srBoundary[\robotnode](\srTarget)$ by the RHR until it has a supply neighbor in direction~\dirS, in which case~$\robot$ switches to phase \phase{liftDepotTileHoles}, or it enters the unique point of the boundary~$\srBoundary[\robotnode](\srTarget)$, in which case~$\robot$ switches to phase \phase{returnSouthHoles}.
\end{itemize}

\subsection{Correctness and Runtime of Bottleneck-Free Reconfiguration}

As in \zcref{sec:target_no_holes:analysis}, we analyze the agent's behavior in individual phases before showing correctness and giving a runtime bound for the complete algorithm.

\begin{lemma}
    \label{lem:fill_boundary}
    If the agent enters phase \textsl{\phase{fillBoundaryHoles}} (\textsl{\phase{fillMainBoundaryHoles}}) on node $\robotnode \in B$ for some target boundary~$B \subseteq \srTarget$, then within~$\bigO(\numTiles)$ time steps, it enters~$\robotnode$ again and one of the following is true:
    \begin{enumerate}[(a)]
        \item \label{lem:fill_boundary_a}%
            $\robot$ no longer carries a tile, or
        \item \label{lem:fill_boundary_b}%
            $\robot$ still carries a tile and all nodes in $B \cup \nbrhood(B) \cap \srTarget$ are tiled.
    \end{enumerate}
\end{lemma}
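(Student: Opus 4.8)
The plan is to trace the agent through a single invocation of the phase, handling \phase{fillMainBoundaryHoles} and \phase{fillBoundaryHoles} together and isolating their only relevant difference: how the start node~$\robotnode$ is recognized on return. In \phase{fillMainBoundaryHoles}, $\robotnode$ is the unique node of~$\srBoundary(\srTarget)$ with a depot (supply) tile to its~\dirS{} --- the connection to the tile depot, which stays fixed because depot tiles are lifted from the \dirS-most end while the column's \dirN-most tile never moves. In \phase{fillBoundaryHoles}, $\robotnode$ is the node marked by the emulated pebble. In both cases the agent traverses the target boundary~$B$ by the LHR, and since $B \subseteq \srTarget$ and the agent can see which neighbors are target nodes, this traversal is driven entirely by the fixed target shape and is unaffected both by which nodes of~$B$ currently hold tiles and by the tile placed during the invocation.

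First I would bound the running time. As a boundary of a finite target shape, $B$ has $|B| = \bigO(\numTiles)$, and, exactly as in the proof of \zcref{lem:traversal}, each node of~$B$ is entered only $\bigO(1)$ times during one LHR cycle because it has at most three disconnected non-target neighbors from which it can be re-entered. Hence the outgoing LHR leg costs $\bigO(\numTiles)$ steps. If the agent places a tile, the demand node lies in $B \cup (\nbrhood(B) \cap \srTarget)$ and is thus on or adjacent to the current boundary node, so reaching it, placing the tile, and stepping back cost $\bigO(1)$; the subsequent RHR return along~$B$ again costs $\bigO(\numTiles)$ steps. The total is $\bigO(\numTiles)$.

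Next I would establish the dichotomy. Over one full LHR cycle the agent occupies every node of~$B$ and is therefore adjacent to every node of $\nbrhood(B) \cap \srTarget$; hence it becomes adjacent to a demand node of $B \cup (\nbrhood(B) \cap \srTarget)$ --- ignoring the pebble-marked node in \phase{fillBoundaryHoles} --- if and only if such a node exists. Here I would note that the agent always acts on the \emph{first} such demand node it becomes adjacent to, before it would ever step onto an untiled node, so it stays on tiled nodes and connectivity is preserved throughout. If it finds a demand node, it places its tile there and case~(a) follows. If it completes the cycle without finding one, every node of $B \cup (\nbrhood(B) \cap \srTarget)$ is tiled --- after the emulated pebble is lifted, which re-tiles~$\robotnode$ --- the agent still carries its tile, and case~(b) follows.

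The hard part is arguing that the agent re-enters exactly its start node~$\robotnode$ in both cases rather than some other node meeting the stop condition. In the no-demand case this is immediate, since an LHR cycle of~$B$ returns to its start. In the demand case I would invoke the duality of the two rules: the RHR retraces the LHR path in reverse, so starting it from the node where the tile was placed walks the agent back along~$B$ to~$\robotnode$, and the start marker (the depot neighbor to the~\dirS, or the emulated pebble) is the first marker met, ruling out a premature stop. A secondary point to verify is that the emulated-pebble bookkeeping is harmless: since tiles are indistinguishable, the agent can move its carried tile onto the non-target neighbor used for the pebble and lift the tile at~$\robotnode$, still carrying exactly one tile and restoring~$\robotnode$ to a tiled state once the pebble is lifted; the required non-target neighbor exists and connectivity is maintained because the target shape has no bottlenecks.
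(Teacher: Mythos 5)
Your overall route matches the paper's: an LHR leg out along~$B$, placing the carried tile at the first demand node of $B \cup (\nbrhood(B) \cap \srTarget)$ encountered, an RHR leg back to the marked start node, and a dichotomy on whether such a demand node exists; the $\bigO(\numTiles)$ bound and the start-node recognition via the depot neighbor or the emulated pebble are also as in the paper. There is, however, one genuine gap. For the running time you import from \zcref{lem:traversal} the argument that a boundary node can be entered $\bigO(1)$ (up to three) times per LHR cycle, but for the dichotomy you then assert that re-entering~$\robotnode$ ``is immediate, since an LHR cycle of~$B$ returns to its start.'' These two statements are in tension: the agent's stop condition is detecting the marker at~$\robotnode$, not completing a cycle, so if~$\robotnode$ (or any node of~$B$) could be entered more than once per cycle, the agent would stop prematurely, declare case~(b), and possibly leave an untraversed --- and untiled --- portion of~$B$, which would falsify the lemma. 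What closes this hole is precisely the bottleneck-free assumption, applied in a place where you did not apply it: since $\nbrhood(v) \setminus \srTarget$ is connected for every $v \in \srTarget$, every node of~$B$ sees a single arc of the outside of~$B$ and is therefore entered exactly once per LHR traversal; for the same reason the outside pointer at~$\robotnode$ can be initialized to \emph{any} non-target neighbor without ambiguity about which boundary is being traversed. The paper states this explicitly (``every target node can only be entered from one direction''), whereas you invoke the no-bottleneck property only for the pebble emulation. With that one observation added, your argument coincides with the paper's proof.
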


We only show the lemma for phase \phase{fillBoundaryHoles} since it is essentially identical to \phase{fillMainBoundaryHoles} and the differences between the phases, e.g., checking for a southern supply neighbor instead of using an emulated pebble, can easily be accounted for.

\begin{proof}[Proof of \zcref{lem:fill_boundary}]
    After~$\robot$ enters phase \phase{fillBoundaryHoles} on node $\robotnode[t] \in B$ for some target boundary $B \subseteq \srTarget$ at some time~$t$, it sets its internal outside pointer $\outsidePointer \in \directions$ to the direction of any node $w \in \nbrhood(\robotnode[t]) \setminus \srTarget$, allowing it to traverse~$B$ by the LHR.
    Since~$\srTarget$ does not contain any bottlenecks, $\nbrhood(\robotnode[t]) \setminus \srTarget$ is connected and therefore, all nodes in $\nbrhood(\robotnode[t]) \setminus \srTarget$ belong to the outside of the boundary~$B$.
    Then,~$\robot$ emulates placing a pebble.
    This is possible because \phase{fillBoundaryHoles} is only entered from phase \phase{traverseColumnHoles} where~$\robot$ previously ensured that~$\robotnode[t]$ and all of its target node neighbors are tiled.
    Thus, the requirements for emulating a pebble are locally satisfied.
    Now, $\robot$ traverses~$B$ by the LHR.
    Note that nodes are not visited twice during a single boundary traversal because the target shape contains no bottlenecks so every target node can only be entered from one direction.
    Therefore, when~$\robot$ visits~$\robotnode[t]$ again without having switched its phase, it has previously traversed the entire boundary~$B$.
    Also note that~$\robot$ can distinguish~$\robotnode[t]$ using the pebble query procedure from \zcref{sec:target_exploration}.

    Assume $B \cup \nbrhood(B) \cap \srTarget \not\subseteq \srTargetTiles$, i.e., there is a node $w \in B \cup \nbrhood(B) \cap \srTarget$ that is not tiled.
    Since~$w$ is in~$B$ or adjacent to a node in~$B$, $\robot$ finds it while traversing~$B$.
    The agent then places its carried tile on~$w$, which it carries because the previous phase \phase{traverseColumnHoles} is only entered when~$\robot$ carries a tile.
    Afterward, $\robot$ traverses~$B$ by the RHR, so it returns to~$\robotnode[t]$ on the same path it took to get to~$w$.
    (\ref{lem:fill_boundary_a}) follows.

    Now assume $B \cup \nbrhood(B) \cap \srTarget \subseteq \srTargetTiles$.
    It is clear that~$\robot$ does not find a demand node during its traversal of~$B$, so it it still carries a tile upon reaching~$\robotnode[t]$ again.
    (\ref{lem:fill_boundary_b}) follows.
\end{proof}

The first phase after the tile depot is built is \phase{liftDepotTileHoles}, after which~$\robot$ carries a tile.
Then, $\robot$ enters phase \phase{fillMainBoundaryHoles} on the node~$v$ north of the tile depot, i.e., $v + \vecS \in \srSupply$.
By \zcref{lem:fill_boundary}(\ref{lem:fill_boundary_a}), $\robot$ places the tile along $\srBoundary(\srTarget)$ or an adjacent node, returns to~$v$, and enters phase \phase{liftDepotTileHoles} again.
These steps are repeated until~$\robot$ finds no more demand nodes along $\srBoundary(\srTarget)$ and it returns to~$v$ carrying a tile (\zcref{lem:fill_boundary}(\ref{lem:fill_boundary_b})), where it sets \srFlag{boundaryFlag}.
Thus, we get the following lemma.

\begin{lemma}
    \label{lem:first_boundary_filled}
    After the agent enters phase \textsl{\phase{liftDepotTileHoles}} for the first time, it switches only between phases \textsl{\phase{liftDepotTileHoles}} and \textsl{\phase{fillMainBoundaryHoles}} while \srFlag{boundaryFlag} is unset.
    Within~$\bigO(\numTiles^2)$ time steps, it sets \srFlag{boundaryFlag} and $\srBoundary(\srTarget) \cup \nbrhood(\srBoundary(\srTarget)) \cap \srTarget \subseteq \srTargetTiles$ holds.
\end{lemma}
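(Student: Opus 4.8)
The plan is to read the two phases \phase{liftDepotTileHoles} and \phase{fillMainBoundaryHoles} as a single ``fetch a depot tile, then sweep the main boundary'' loop, to show this loop is all that happens while \srFlag{boundaryFlag} is unset, and to bound its iteration count by a monovariant; the workhorse is \zcref{lem:fill_boundary}, applied with its boundary taken to be $\srBoundary(\srTarget)$. First I would pin down where each sweep begins. By \zcref{lem:tile_depot} the depot is an \dirN-column $(v_1,\dots,v_\numTiles)$ south of $\srTarget$ with $v_\numTiles + \vecN \in \srTarget$ and $v_i \notin \srTarget$ for all $i$. In \phase{liftDepotTileHoles} the agent lifts the \dirS-most depot tile and moves \dirN{} to the first target node it meets, the fixed node $v \coloneq v_\numTiles + \vecN$; since tiles are always lifted from the bottom, the top of the depot never moves, so $v$ is the same in every iteration and $v + \vecS = v_\numTiles \in \srSupply$ is the unique main-boundary node with a southern supply neighbor while the depot is non-empty. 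As the depot lies south of the whole target, the southern neighbor of $v$ lies in the infinite outside, so $v \in \srBoundary(\srTarget)$, and the agent carries the tile it just lifted; hence every entry into \phase{fillMainBoundaryHoles} satisfies the hypothesis of \zcref{lem:fill_boundary} on node $v$.

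For the transition claim, while \srFlag{boundaryFlag} is unset, \phase{liftDepotTileHoles} passes control to \phase{fillMainBoundaryHoles} by its description, and by \zcref{lem:fill_boundary} the agent is back at $v$ within $\bigO(\numTiles)$ steps in one of two cases. In case~(\ref{lem:fill_boundary_a}) it has placed its tile on a demand node of $\srBoundary(\srTarget) \cup (\nbrhood(\srBoundary(\srTarget)) \cap \srTarget)$, backtracks by the RHR to the unique node with a southern supply neighbor (namely $v$), and re-enters \phase{liftDepotTileHoles} with the flag still unset. In case~(\ref{lem:fill_boundary_b}) it sets \srFlag{boundaryFlag} and leaves to \phase{traverseColumnHoles}. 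Thus the only transitions taken with the flag unset run between \phase{liftDepotTileHoles} and \phase{fillMainBoundaryHoles}, which proves the first assertion, and the flag is set exactly once, on the terminating case-(\ref{lem:fill_boundary_b}) step.

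For termination and the runtime, let $S^\ast \coloneq \srBoundary(\srTarget) \cup (\nbrhood(\srBoundary(\srTarget)) \cap \srTarget) \subseteq \srTarget$, a fixed set with $|S^\ast| = \bigO(\numTiles)$. Each case-(\ref{lem:fill_boundary_a}) iteration tiles one previously untiled node of $S^\ast$, while no step removes a tile from $S^\ast$ (tiles are lifted only from the depot, which is disjoint from $\srTarget$); hence the number of untiled nodes of $S^\ast$ strictly decreases and never increases. After at most $|S^\ast| = \bigO(\numTiles)$ iterations all of $S^\ast$ is tiled, so the next sweep finds no demand node and triggers case~(\ref{lem:fill_boundary_b}), yielding $\srBoundary(\srTarget) \cup (\nbrhood(\srBoundary(\srTarget)) \cap \srTarget) \subseteq \srTargetTiles$ and setting the flag. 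Each iteration costs $\bigO(\numTiles)$ (an up-and-down pass over the depot column in \phase{liftDepotTileHoles}, plus $\bigO(\numTiles)$ for the sweep), so the total is $\bigO(\numTiles^2)$.

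The main obstacle I expect is the geometric bookkeeping: verifying that \phase{liftDepotTileHoles} always returns the agent to the one and the same main-boundary node $v$, and that the depot is not exhausted before case~(\ref{lem:fill_boundary_b}) fires. The latter holds because a target shape with holes has interior target nodes outside $S^\ast$, so at least $|\srTarget| - |S^\ast| > 0$ tiles remain in the depot throughout; everything else reduces to a routine repeated application of \zcref{lem:fill_boundary}.
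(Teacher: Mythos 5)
Your proposal is correct and matches the paper's own (very brief) justification: the paper simply notes that the agent alternates between \phase{liftDepotTileHoles} and \phase{fillMainBoundaryHoles}, applying \zcref{lem:fill_boundary}(\ref{lem:fill_boundary_a}) once per depot tile until case~(\ref{lem:fill_boundary_b}) fires and the flag is set. Your added bookkeeping (the fixed re-entry node $v$ north of the depot, the monovariant on untiled nodes of $\srBoundary(\srTarget) \cup (\nbrhood(\srBoundary(\srTarget)) \cap \srTarget)$, and the $\bigO(\numTiles)$ cost per iteration) is exactly the elaboration the paper leaves implicit.
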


We know from \zcref{sec:target_exploration} that~$\robot$ can explore the entire target shape as long as the target nodes encountered by~$\robot$ during exploration are tiled to allow for pebble emulation.
With the next two lemmas, we show that this condition is locally fulfilled before~$\robot$ executes the exploration phases that rely on it.

\begin{lemma}
    \label{lem:no_demand_phases}
    If the agent carries a tile and is in phase \textsl{\phase{returnSouthHoles}} or \textsl{\phase{traverseBoundaryHoles}}, then it does not encounter a demand node before switching to another phase.
\end{lemma}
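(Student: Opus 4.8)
The plan is to show that \phase{returnSouthHoles} and \phase{traverseBoundaryHoles} act as \emph{retracing} phases: while carrying a tile, the agent moves only over target nodes that are already tiled and is never adjacent to an untiled target node, so no demand node can lie on its path. I would handle the two phases separately, organized around the invariant that \emph{whenever the agent enters \phase{traverseBoundaryHoles} carrying a tile, the boundary $B = \srBoundary[\robotnode](\srTarget)$ it is about to traverse is already fully tiled}, meaning $B \cup (\nbrhood(B) \cap \srTarget) \subseteq \srTargetTiles$.

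Granting the invariant, the claim for \phase{traverseBoundaryHoles} is immediate: as $B$ and all of its adjacent target nodes carry tiles, the LHR traversal visits only tiled target nodes and is never adjacent to a demand node. To establish the invariant I would inspect the two transitions that enter the phase carrying a tile. The transition from \phase{fillBoundaryHoles} at the unique point is exactly the situation of \zcref{lem:fill_boundary}(\ref{lem:fill_boundary_b}), which certifies that $B$ has just been filled. The transition from \phase{returnSouthHoles} happens at the bottom of a column, where the agent resumes traversing the very boundary it had been exploring \emph{before} climbing that column; by induction on the nesting depth of boundaries within the current trip, with \zcref{lem:first_boundary_filled} supplying the base case for $\srBoundary(\srTarget)$, this boundary was filled before its traversal began.

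For \phase{returnSouthHoles} I would use that every southward descent retraces a column the agent had climbed demand-free earlier in the same trip. The phase is entered carrying a tile only from \phase{traverseBoundaryHoles} at a boundary's unique point, or from \phase{fillBoundaryHoles} when the filled boundary's start node is not its unique point. In both cases the descent begins at the northern end of a column previously climbed in \phase{traverseColumnHoles}---directly in the second case, and in the first because a boundary traversal is anchored at its unique point and therefore starts and ends at the top of the column used to initiate it. In \phase{traverseColumnHoles} the agent advances strictly north and places its tile the instant it becomes adjacent to \emph{any} demand node; since it still carries a tile, that climb encountered no demand node, and as the set of demand nodes only shrinks over time, the identical column is still demand-free when \phase{returnSouthHoles} descends it.

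The main obstacle I expect is the bookkeeping behind the \phase{traverseBoundaryHoles} invariant, specifically verifying that the \phase{returnSouthHoles}$\to$\phase{traverseBoundaryHoles} transition always returns the agent to a boundary filled earlier in the trip; this forces a careful induction over the nested inner boundaries together with the fact that a new boundary traversal only ever begins immediately after the matching \phase{fillBoundaryHoles} step (or, for $\srBoundary(\srTarget)$, after \zcref{lem:first_boundary_filled}). A secondary point is to confirm that the descent in \phase{returnSouthHoles} coincides node-for-node with the preceding climb, which holds because both run along the same maximal north--south column of target nodes.
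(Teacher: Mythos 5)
Your proposal is correct and follows essentially the same route as the paper: for \phase{returnSouthHoles} you retrace a column previously climbed demand-free in \phase{traverseColumnHoles}, and for \phase{traverseBoundaryHoles} you reduce to the boundary being fully tiled via \zcref{lem:fill_boundary} (inner boundaries) and \zcref{lem:first_boundary_filled} (main boundary). Your explicit induction on the nesting of boundaries within a trip is slightly more careful bookkeeping than the paper's brief ``previously traversed already'' remark, but it is the same underlying argument.
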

\begin{proof}
    The agent~$\robot$ enters phase \phase{returnSouthHoles} from either phase \phase{fillBoundaryHoles} (preceded by \phase{traverseColumnHoles}) or phase \phase{traverseBoundaryHoles} if it is located on a unique point.
    In the first case, $\robot$ does not encounter a demand node since any demand node on the agent's column would have already been found in phase \phase{traverseColumnHoles} and the agent would have placed its carried tile there.
    For the second case, note that~$\robot$ only traverses an inner boundary~$B \subseteq \srTarget$ in phase \phase{traverseBoundaryHoles} if it previously entered~$B$'s unique point~$u_B \in B$ in phase \phase{traverseColumnHoles}.
    Thus, by the same argument, $\robot$ does not encounter a demand node when traversing the \dirN-column with end node~$u_B$ in direction~\dirS.

    Similarly, $\robot$ enters phase \phase{traverseBoundaryHoles} from either \phase{fillBoundaryHoles} or \phase{returnSouthHoles}.
    In the first case, the inner boundary~$B \subseteq \srTarget$ traversed by~$\robot$ is fully tiled due to \zcref{lem:fill_boundary}.
    In the second case, note that in phase \phase{returnSouthHoles}, $\robot$ only traverses a column previously traversed in phase \phase{traverseColumnHoles} as argued before.
    The start node of that column is part of some boundary~$B$ that is either $\srBoundary(\srTarget)$, which is fully tiled by \zcref{lem:first_boundary_filled} since \srFlag{boundaryFlag} is set, or some inner boundary that was previously traversed already, so it is fully tiled by \zcref{lem:fill_boundary}(\ref{lem:fill_boundary_a}).
\end{proof}

\begin{lemma}
    \label{lem:find_demand_node}
    If \srFlag{boundaryFlag} is set and the agent carries a tile, it switches between phases \textsl{\phase{traverseColumnHoles}}, \textsl{\phase{fillBoundaryHoles}}, \textsl{\phase{returnSouthHoles}}, and \textsl{\phase{traverseBoundaryHoles}} until it finds a demand node that does not belong to an emulated pebble within~$\bigO(\numTiles^3)$ time~steps.
\end{lemma}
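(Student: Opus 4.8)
The plan is to show that the four phases \phase{traverseColumnHoles}, \phase{fillBoundaryHoles}, \phase{returnSouthHoles}, and \phase{traverseBoundaryHoles} together emulate the three-phase \mazeAlg{} algorithm of \zcref{sec:target_exploration} on the target shape~$\srTarget$, interrupted only when a genuine demand node is reached. First I would fix the correspondence: \phase{traverseColumnHoles}, \phase{returnSouthHoles}, and \phase{traverseBoundaryHoles} mirror \phase{traverseColumnMaze}, \phase{returnSouthMaze}, and \phase{traverseBoundaryMaze}, while \phase{fillBoundaryHoles} supplies the pebble emulation and unique-point detection that the boundary step of \mazeAlg{} needs. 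The precondition for emulating a pebble---that the traversed boundary together with its adjacent target nodes is tiled---is guaranteed by the filling mechanism: since \srFlag{boundaryFlag} is set, \zcref{lem:first_boundary_filled} yields $\srBoundary(\srTarget) \cup \nbrhood(\srBoundary(\srTarget)) \cap \srTarget \subseteq \srTargetTiles$, and \zcref{lem:fill_boundary}(\ref{lem:fill_boundary_b}) yields the analogous statement for every inner boundary the first time the agent would emulate a pebble on it. Hence the local requirements underlying \zcref{lem:exploration_no_pebbles} hold along every boundary the agent traverses, so the combined phases perform the same systematic visit of all target nodes as \mazeAlg{}.

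Next I would show that a genuine demand node---one not created by an emulated pebble---must be encountered. Because the agent carries a tile lifted from the depot and $|\srTarget| = \numTiles$, strictly fewer than~$\numTiles$ target nodes are tiled, so at least one demand node exists. By \zcref{lem:no_demand_phases}, the agent encounters no demand node while in \phase{returnSouthHoles} or \phase{traverseBoundaryHoles}, since those phases only retrace already-filled columns and boundaries; consequently every genuine demand node is detected (by adjacency) during the ``fresh'' exploration in \phase{traverseColumnHoles} or \phase{fillBoundaryHoles}. As the emulation visits every tiled target node, in particular every tiled node adjacent to a demand node is visited during one of these two phases, at which point the agent recognizes the demand node---distinguishing it from an emulated-pebble node via the pebble query procedure of \zcref{sec:target_exploration}, exactly as the phase descriptions demand a demand node ``with no emulated pebble.'' The agent then places its carried tile there and switches to \phase{returnSouthHoles}, which terminates the loop over the four phases.

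For the runtime, I would invoke the maze-exploration bound of \zcref{sec:target_exploration}: a full two-pebble traversal of a connected shape of size~$\numTiles$ takes $\bigO(\numTiles^3)$ steps, and emulating each pebble operation with tiles adds only constant overhead (pebbles sit on boundary tiles, and conflicts between two pebbles within constant distance are absorbed into the agent's finitely many states). The only additional cost is the boundary filling in \phase{fillBoundaryHoles}, which by \zcref{lem:fill_boundary} costs $\bigO(\numTiles)$ per boundary; since~$\srTarget$ has $\bigO(\numTiles)$ distinct boundaries and each is filled at most once during a single search (the emulation restores every temporarily lifted tile, so a filled boundary stays tiled), the fills contribute $\bigO(\numTiles^2)$ in total, dominated by $\bigO(\numTiles^3)$. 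Thus a genuine demand node is reached within $\bigO(\numTiles^3)$ time steps.

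The main obstacle I anticipate is verifying that the two kinds of interruptions---placing a tile the instant a genuine demand node appears, and detouring inside \phase{fillBoundaryHoles} to tile a missing boundary node before resuming---do not corrupt the exploration invariants of \mazeAlg{}. Concretely, after each interruption the agent must resume from a consistent state: on the correct boundary, with its emulated pebble either still in place or cleanly lifted, and having returned to the exact node from which it deviated. \zcref{lem:fill_boundary} already establishes this return-on-the-same-path property for the fills via the RHR; the remaining care is to confirm that the RHR backtracking after a tile placement always lands the agent on a boundary that was previously traversed, and therefore already tiled, so that pebble emulation stays valid on the way back to the depot.
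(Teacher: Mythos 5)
Your proposal is correct and follows essentially the same route as the paper's proof: both reduce the four phases to the target exploration algorithm of \zcref{lem:exploration_no_pebbles}, use \zcref{lem:fill_boundary} (together with \zcref{lem:first_boundary_filled}) to guarantee the pebble-emulation preconditions along each traversed boundary, derive the existence of a demand node from the fact that the agent carries a tile taken from outside the target shape, and obtain the $\bigO(\numTiles^3)$ bound from the exploration lemma. The paper's version is terser; your additional bookkeeping (e.g., that the boundary fills cost only $\bigO(\numTiles^2)$ and are dominated) is consistent with it.
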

\begin{proof}
    In all of the four phases, $\robot$ is on a target node $\robotnode \in \srTarget$.
    Thus, as long as~$\robot$ does not encounter a non-pebble demand node $w \in \srDemand$, it effectively executes the target exploration algorithm described in \zcref{sec:target_exploration}.
    The main difference is that instead of querying for a unique point at the end of \phase{traverseColumnHoles}, $\robot$ first traverses the boundary~$B \subseteq \srTarget$ at the unique point candidate node in phase \phase{fillBoundaryHoles}.
    By \zcref{lem:fill_boundary}, $\robot$ either finds a demand node, in which case we do not need to argue any further, or it returns to the unique point candidate and the boundary~$B$ is fully tiled, i.e., $B \cup \nbrhood(B) \cap \srTarget \subseteq \srTargetTiles$.
    In this case, the requirements for emulating pebbles are fulfilled along~$B$ and~$\robot$ continues with the target exploration algorithm as usual.

    At least one demand node exists as~$\robot$ is carrying a tile which previously occupied a non-target node, so~$\robot$ eventually finds a demand node $w \in \srDemand$ while exploring~$\srTarget$.
    This happens within $\bigO(\numTiles^3)$ time steps by \zcref{lem:exploration_no_pebbles}.
\end{proof}

It remains to show that~$\robot$ returns to the depot after placing a tile at node~$w$.
It does so by applying the traversal strategy in reverse.
We argue that~$\robot$ traverses the same boundaries and columns as it did on its way from the depot to~$w$ in opposite direction.

\begin{lemma}
    \label{lem:back_to_depot}
    If \srFlag{boundaryFlag} is set and the agent places a tile, it enters phase \textsl{\phase{liftDepotTileHoles}} within~$\bigO(\numTiles^3)$ time steps.
\end{lemma}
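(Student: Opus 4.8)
The plan is to establish that the return trip is the exact mirror image of the outbound trip of \zcref{lem:find_demand_node}: the agent retraces, in reverse, the walk that carried it from the depot to the demand node, replacing every left-hand boundary traversal by a right-hand one and every northward column ascent by a southward descent. The correctness then reduces to two facts I would prove in turn: that the right-hand rule traverses a boundary $B \subseteq \srTarget$ in exactly the opposite direction to the left-hand rule (relative to the same outside), so that an RHR traversal undoes an LHR traversal; and that the southward motion of \phase{returnSouthHoles} undoes the northward motion of \phase{traverseColumnHoles} along the same column.

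First I would fix the entry into the return regime. Whether the tile was placed in phase \phase{traverseColumnHoles} or in phase \phase{fillBoundaryHoles}, the agent afterwards returns to the node from which it deviated and switches to \phase{returnSouthHoles} no longer carrying a tile; hence \phase{returnSouthHoles} now hands off to \phase{returnToDepotHoles} rather than to \phase{traverseBoundaryHoles}, and from this point on the agent uses only the two return phases. Next I would describe the outbound walk as a depth-first traversal of the tree of target boundaries (as in \zcref{sec:target_exploration}): at the moment the tile is placed the agent sits inside a nested stack $\srBoundary(\srTarget) = B_0, B_1, \dots, B_k$, where each $B_i$ (for $i \geq 1$) was entered from $B_{i-1}$ by climbing a column $c_i$ whose top node is the unique point $u_{B_i}$, detected during phase \phase{fillBoundaryHoles}.

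The heart of the argument is an induction on the stack depth showing that the return phases pop this stack level by level. On level~$i$, phase \phase{returnToDepotHoles} traverses $B_i$ by the RHR; by the mirror property this retraces the outbound LHR portion of $B_i$ backwards until the agent enters $u_{B_i}$, at which point the unique-point transition sends it into \phase{returnSouthHoles}. There the agent descends $c_i$ southward---exactly undoing the outbound ascent---and, since the start node of $c_i$ is a column start with a non-target southern neighbor, it halts precisely on $B_{i-1}$ and re-enters \phase{returnToDepotHoles} at the correct position. Iterating reaches $B_0$, where the distinguishing feature is not a unique point but the depot: continuing the RHR traversal of $\srBoundary(\srTarget)$, the agent meets the node $v$ with $v + \vecS$ a supply (depot) node---the unique depot connection guaranteed by \zcref{lem:tile_depot}---and switches to \phase{liftDepotTileHoles}. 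For the runtime, the return visits a subset of the boundaries and columns visited on the way out and performs at most one unique-point detection per boundary node, so its length is bounded by that of the outbound walk, namely $\bigO(\numTiles^3)$ by \zcref{lem:find_demand_node}.

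The main obstacle is making the inductive step watertight at the transition points. I must verify that the RHR traversal launched from the return position retraces the outbound LHR path of $B_i$ \emph{exactly}, so that the unique-point detector fires at $u_{B_i}$ and nowhere else and the ensuing descent lands on $c_i$ rather than a sibling column; here the absence of bottlenecks in $\srTarget$ is essential, since it forces every target node to be entered from a single direction during a boundary traversal and thereby makes the outbound and return positions correspond. A second delicate point is the main boundary: I must check that when the RHR traversal of $B_0$ passes through $u_{B_0}$ it is not trapped there---the momentary switch to \phase{returnSouthHoles} is a no-op that immediately returns control to \phase{returnToDepotHoles}, which steps off $u_{B_0}$ and continues toward $v$---and that $v$ is reached with the outside pointer still consistent, so that the depot is correctly recognized and the phase ends in \phase{liftDepotTileHoles}.
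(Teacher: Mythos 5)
Your proposal is correct and takes essentially the same route as the paper's own proof: the paper likewise argues that after placing the tile the agent alternates between RHR boundary traversals back to unique points (\phase{returnToDepotHoles}) and southward descents of columns it previously ascended (\phase{returnSouthHoles}), thereby backtracking the outbound walk until it reaches the node with a depot tile to its south. Your explicit induction on the nesting depth of boundaries, and your attention to the corner cases at the main boundary's unique point, merely make more formal what the paper states as a backtracking invariant.
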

\begin{proof}
    If~$\robot$ is not already at the tile depot and it has placed a tile at time~$t$, it is (about to be) in either phase \phase{fillBoundaryHoles} or \phase{returnSouthHoles}.
    In phase \phase{fillBoundaryHoles}, $\robot$ traverses its current target boundary $\srBoundary[\robotnode](\srTarget)$ by the RHR (instead of the LHR since it is no longer carrying a tile) until it reaches the node on which it previously entered phase \phase{fillBoundaryHoles}, which is distinguishable because the node holds an emulated pebble.
    Then, $\robot$ enters phase \phase{returnSouthHoles}.

    Whenever~$\robot$ is in phase \phase{returnSouthHoles} before finding the tile depot, we claim that~$\robot$ traverses a column in direction~\dirS{} which it previously traversed in direction~\dirN{} in phase \phase{traverseColumnHoles}, i.e., all nodes traversed in phase \phase{returnSouthHoles} are tiled.
    Initially, this invariant holds:
    The agent~$\robot$ enters phase \phase{returnSouthHoles} for the first time after placing its carried tile from either phase \phase{fillBoundaryHoles} (as above) on the end node of an \dirN-column previously traversed in phase \phase{traverseColumnHoles}, or phase \phase{traverseColumnHoles} itself.

    Once~$\robot$ reaches the start of its current \dirN-column (but not yet the tile depot), it switches to phase \phase{returnToDepotHoles}.
    Here, $\robot$ moves around its current target boundary $\srBoundary[\robotnode](\srTarget)$ by the RHR again until it reaches the boundary's unique point where it previously entered phase \phase{traverseBoundaryHoles} from phase \phase{traverseColumnHoles}.
    Thus, the invariant holds.
    By switching between phases \phase{returnSouthHoles} and \phase{returnToDepotHoles}, $\robot$ backtracks its previous path and eventually returns to the node where it entered \phase{traverseColumnHoles} for the first time after lifting the most recently placed tile from the depot.
    Here, $\robot$ switches to phase \phase{liftDepotTileHoles} and the lemma follows.
\end{proof}

As before, the correctness and runtime of the algorithm follow from the lemmas above.

\begin{theorem}
    \label{thm:target_holes}
    The agent can solve an instance of the \textsl{\ShapeReconfigurationProblem{}} with a bottleneck-free target shape and terminate in $\bigO(\numTiles^4)$ time steps.
\end{theorem}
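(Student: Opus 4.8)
The plan is to combine the phase-wise lemmas of this section, decomposing the total runtime into three successive stages---depot construction, main-boundary filling, and iterative interior filling---and then summing their costs. The whole point of the depot strategy is that it relocates \emph{all} $\numTiles$ tiles rather than only the misplaced ones, so the bound depends on $\numTiles$ alone and not on $\numSupply$; I would state this as the structural invariant driving the analysis.

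First I would account for the preprocessing. By \zcref{lem:tile_depot}, the agent transforms the initial configuration into a tile depot---an \dirN-column containing all $\numTiles$ tiles, positioned south of $\srTarget$ with every depot node a non-target node---within $\bigO(\numTiles^2)$ time steps. Consequently every target node is now a demand node. Next I would handle the main boundary: the agent begins in phase \phase{liftDepotTileHoles}, and by \zcref{lem:first_boundary_filled} it oscillates only between \phase{liftDepotTileHoles} and \phase{fillMainBoundaryHoles} until, within a further $\bigO(\numTiles^2)$ time steps, it sets \srFlag{boundaryFlag} and guarantees $\srBoundary(\srTarget) \cup \nbrhood(\srBoundary(\srTarget)) \cap \srTarget \subseteq \srTargetTiles$. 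This is exactly the tiling precondition that makes pebble emulation legal along the main boundary.

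The dominant cost lies in the interior-filling loop, which I would analyze last. Once \srFlag{boundaryFlag} is set, each pass lifts the \dirS-most depot tile in $\bigO(\numTiles)$ steps; then, by \zcref{lem:find_demand_node}, the agent locates a genuine (non-pebble) demand node within $\bigO(\numTiles^3)$ steps, correctly maintaining the exploration invariant because \zcref{lem:no_demand_phases} ensures it never misidentifies a demand node in phases \phase{returnSouthHoles} or \phase{traverseBoundaryHoles}, so that \zcref{lem:fill_boundary} tiles each inner boundary before pebbles are emulated on it and \zcref{lem:exploration_no_pebbles} governs the traversal; finally it places the tile and returns to the depot within $\bigO(\numTiles^3)$ steps by \zcref{lem:back_to_depot}. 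Since each pass moves exactly one tile from the depot to a demand node and the depot holds $\bigO(\numTiles)$ tiles, at most $\bigO(\numTiles)$ passes occur, giving $\bigO(\numTiles) \cdot \bigO(\numTiles^3) = \bigO(\numTiles^4)$. Summing the stages yields $\bigO(\numTiles^2) + \bigO(\numTiles^2) + \bigO(\numTiles^4) = \bigO(\numTiles^4)$.

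For correctness and termination I would observe that each pass strictly decreases the number of demand nodes by one, and that a demand node is always found while a tile is carried (\zcref{lem:find_demand_node}); hence all demand nodes are eventually filled, at which point the depot is empty, $\tiles = \srTarget$, and the agent halts. The hard part will not be the arithmetic but arguing that the loop is well-founded end to end: that the backtracking return of \zcref{lem:back_to_depot} always re-locates the depot so the next pass can begin, and that the interleaving of boundary filling with interior exploration never leaves the agent emulating a pebble on an untiled boundary. These are precisely the guarantees furnished by \zcref{lem:no_demand_phases,lem:find_demand_node,lem:back_to_depot}, so the theorem reduces to stitching them together and confirming the $\bigO(\numTiles)$ iteration count.
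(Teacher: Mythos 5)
Your proposal is correct and follows essentially the same decomposition as the paper's proof: depot formation via \zcref{lem:tile_depot}, main-boundary filling via \zcref{lem:first_boundary_filled}, and then $\bigO(\numTiles)$ passes of cost $\bigO(\numTiles^3)$ each, justified by \zcref{lem:no_demand_phases,lem:find_demand_node,lem:back_to_depot}. The only detail you gloss over is the concrete termination mechanism---the paper has the agent set a \srFlag{terminationFlag} upon lifting the last depot tile and halt after placing it outside the unique point detection subroutine---but this is a minor omission rather than a gap.
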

\begin{proof}
    First, by \zcref{lem:tile_depot}, the agent~$\robot$ forms a tile depot, i.e., a line of tiles whose \dirN-most node has a northern target node neighbor, within~$\bigO(\numTiles^2)$ time steps.
    Then, by \zcref{lem:fill_boundary,lem:first_boundary_filled}, $\robot$ fills~$\srBoundary(\srTarget)$ and the adjacent target nodes with tiles before initiating the exploration phases.
    Now, whenever~$\robot$ leaves the depot with a tile, it finds a demand node~$w \in \srDemand$ in phase \phase{traverseColumnHoles} or phase \phase{fillBoundaryHoles} by \zcref{lem:no_demand_phases,lem:find_demand_node} within $\bigO(\numTiles^3)$ time steps.
    In both cases, $\robot$ places its carried tile on~$w$.
    Afterward, $\robot$ finds its way back to the depot by \zcref{lem:back_to_depot}.
    All steps after the tile depot formation are repeated~$\numTiles$ times, i.e., once for each depot tile, until all demand nodes are tiled and the \ShapeReconfigurationProblem{} is solved.

    Checking for termination is simple:
    Once the agent lifts the last remaining supply tile from the depot, it sets an internal \srFlag{terminationFlag}.
    Then, when~$\robot$ places a carried tile while not executing the unique point detection subroutine, it terminates if \srFlag{terminationFlag}~is~set.
\end{proof}

\section{Discussion and Future Work}
\label{sec:conclusion}

We have shown that a single agent can solve the \ShapeReconfigurationProblem{} for simply connected target shapes in worst-case optimal $\bigO(\numSupply \numTiles)$ steps~(\zcref{thm:target_no_holes,thm:lower_bound}) and for bottleneck-free shapes in~$\bigO(\numTiles^4)$ steps~(\zcref{thm:target_holes}).
Additionally, an agent can solve the problem for arbitrary target shapes without constructing an intermediate tile depot in~$\bigO(\numSupply \numTiles^2)$ steps if equipped with a counter or~$\bigO(\numSupply \numTiles^3)$ steps if equipped with two pebbles (\zcref{thm:target_holes_counter}).
It remains an open question whether shape reconfiguration is possible on arbitrary shapes without the use of pebbles or counters.
We believe this is not the case, just like visiting every cell in a grid maze is impossible for a deterministic finite automaton as shown by Budach~\cite{budach1978automata}.
However, proving this conjecture may not be straightforward as the agent in our setting is more powerful than the automaton in Budach's proof due to its ability to reconfigure its environment.

An agent executing the first algorithm for simply connected target shapes does not terminate once the problem is solved.
We can easily adjust the algorithm to enable termination in~$\bigO(\numTiles^2)$ steps by making use of the tile depot formation algorithm presented in \zcref{sec:tile_depot}, but it would be interesting to see whether faster termination is possible.

Finally, a natural follow-up is to examine reconfiguration with multiple cooperating agents.
As a first step omitted from this paper, we developed and implemented an algorithm for hole-free initial and target shapes and agents with a shared sense of direction.
However, for more general settings, the ideas presented here cannot easily be adapted since agents lifting tiles at different positions can unknowingly disconnect the configuration.

\bibliography{bibliography}

\end{document}